\newcommand{\bra}[1]{\langle {#1} |}
\newcommand{\ket}[1]{| {#1} \rangle}
\newcommand{\Tr}{\operatorname{Tr}}
\newtheorem{theorem}{Theorem}
\newtheorem{lemma}[theorem]{Lemma}
\newtheorem{proposition}[theorem]{Proposition}
\newtheorem{remark}{Remark}
\newenvironment{proof}[1][Proof]{\noindent\textbf{#1.} }{\ \rule{0.5em}{0.5em}}
\begin{document}

\title{RLD Fisher Information Bound for Multiparameter Estimation of Quantum Channels}
\author{Vishal Katariya\thanks{Hearne Institute for Theoretical Physics, Department of Physics and Astronomy, and Center for Computation and Technology, Louisiana State University, Baton Rouge, Louisiana 70803, USA} \and Mark M.~Wilde\footnotemark[1] \thanks{Stanford Institute for Theoretical Physics, Stanford University, Stanford, California 94305, USA}}
\date{\today}

\maketitle

\begin{abstract}
	One of the fundamental tasks in quantum metrology is to estimate multiple parameters embedded in a noisy process, i.e., a quantum channel. In this paper, we study fundamental limits to quantum channel estimation via  the concept of amortization and the right logarithmic derivative (RLD) Fisher information value.
	Our key technical result is the proof of a chain-rule inequality for the RLD Fisher information value, which implies that amortization, i.e., access to a catalyst state family, does not increase the RLD Fisher information value of quantum channels. This technical result leads to a  fundamental and efficiently computable limitation for multiparameter channel estimation in the sequential setting, in terms of the RLD Fisher information value.
	As a consequence, we conclude that if the RLD Fisher information value is finite, then Heisenberg scaling is unattainable in the multiparameter setting.
\end{abstract}

\tableofcontents

\section{Introduction}

Parameter estimation is an information-processing task in which quantum technologies can provide an improvement in performance over the best known classical technologies. On one hand, classical parameter estimation is fundamentally limited by the shot-noise limit, which means that the error scales no better than $1/n$. Here, $n$ refers either to the number of channel uses or to the total probing time allowed in the estimation task. On the other hand, quantum parameter estimation can offer a superclassical ``Heisenberg'' scaling in estimation error, in principle~\cite{Giovannetti2006}. Heisenberg scaling refers to the mean-squared error of an estimator scaling as $1/n^2$. One of the fundamental questions in quantum estimation theory is to identify estimation tasks for which Heisenberg scaling is possible, and attainable.

One of the most important mathematical tools in estimation theory, both classical and quantum, is the Cramer--Rao bound (CRB) \cite{Cram46, Rao45, Kay93}. It is used to place lower bounds on the mean squared error of estimators and involves an information  quantity known as the Fisher information. The latter captures the amount of information carried by a distribution, quantum state, or quantum channel regarding the unknown parameter(s). When using quantum resources, there is an infinite number of noncommutative generalizations of the Fisher information.

In this paper, we focus on one particular quantum generalization of the Fisher information, known as the right logarithmic derivative (RLD) Fisher information \cite{YL73}, and we study its properties and application in multiparameter quantum channel estimation. In particular, we prove that the RLD Fisher information of quantum channels is a single-letter and efficiently computable multiparameter Cramer--Rao bound for all quantum channels. 
Our bound applies to all quantum channels in the most general setting of sequential channel estimation, depicted in Figure~\ref{fig:sequential strategies}, which encompasses all possible noisy dynamics. This means that our bound is universally applicable. Moreover, it is ``single-letter'', a term from information theory, which means that the RLD Fisher information is evaluated with respect to a single copy of the channel only (see Theorem \ref{thm:single-letter-multi-param-crb} for a precise statement of our result). Our bound thus has a number of desirable properties as well as numerical amenability, while being applicable to sequential, $n$-round estimation strategies.

We approach this problem by introducing the amortized RLD Fisher information value, which quantifies the net increase in RLD Fisher information that one obtains by sampling from a quantum channel. We prove a chain-rule property for the RLD Fisher information value of a quantum channel, which we use to conclude our single-letter multiparameter Cramer--Rao bound.
This bound has the implication that Heisenberg scaling is unattainable for channel estimation in the multiparameter setting, whenever the RLD Fisher information value is finite. 
Finally, we apply our bound to a concrete example of physical interest: estimating the parameters of a generalized amplitude damping channel. 

\begin{figure*}
\centering
\begin{tikzpicture}[scale=1.5]
	\draw (0.375, 0.75-0.375) -- (0.75,0);
	\draw (0.375, 0.75-0.375) -- (0.75, 1.5-0.75);
	\node at (0.15 - 0.15, 0.75-0.375) {$\rho_{R A_1}$};
	\draw (0.75,0) -- (1,0);
	\draw (1,-0.25) rectangle (1.5,0.25);
	\node at (1.25,0) {$\mathcal{N}^{\bm{\theta}}$};
	\node at (0.5,0) {\small $A_1$};
	\node at (1.8,0.14 + 0.04) {\small$B_1$};
	\draw (1.5,0) -- (2,0);
	\draw (0.75,1.5-0.75) -- (2,1.5-0.75);
	\draw (2, 0.75+0.25) rectangle (2.5, 0.75-1); 
	\draw (2.5, 1.5-0.75) -- (4, 1.5-.75);
	\node at (2.25, 0.75-0.375) {$\mathcal{S}^1$};
	\draw (3,-0.25) rectangle (3.5,0.25);
	\node at (3.25,0) {$\mathcal{N}^{\bm{\theta}}$};
	\node at (2.75,0.14 + 0.04) {\small$A_2$};
	\node at (3.75,0.14 + 0.04) {\small$B_2$};
	\draw (3.5,0) -- (4,0); 
	\draw (2.5,0) -- (3,0); 
	\draw (4.5, 0) -- (5, 0); 
	\node at (4.75, 0.14 + 0.04) {$A_3$};
	\draw (4, 0.75+0.25) rectangle (4.5, 0.75-1); 
	\node at (1.8, 1.5-0.6 + 0.04) {$R_1$};
	\node at (3.75, 1.5-0.6 + 0.04) {$R_2$};
	\node at (4.75, 1.5-0.6 + 0.04) {$R_3$};
	\draw (4.5, 1.5-0.75) -- (5, 1.5-0.75);
	\node at (5.75+1, 1.5-0.6 + 0.04) {$R_n$};
	\node at (4.25, 0.75-0.375) {$\mathcal{S}^2$}; 
	\node at (5.25,0.75-0.375) {$\cdots$};
	\draw (4.5+1, 1.5-0.75) -- (6+1, 1.5-0.75);
	\draw (4.5+1,0) -- (5+1,0);
	\node at (4.75+1, 0.14 + 0.04) {$A_n$};
	\node at (5.75+1, 0.14 + 0.04) {$B_n$};
	\draw (5.5+1, 0) -- (6+1, 0);
	\draw (5+1,-0.25) rectangle (5.5+1,0.25);
	\node at (5.25+1, 0) {$\mathcal{N}^{\bm{\theta}}$};
	\draw (6+1,0.75-1) rectangle (6.75+1,0.75+0.25); 
	\node at (6.375+1,0.75-0.375) {$\Lambda^{\hat{\bm{\theta}}}$};
	\draw[double distance between line centers=0.2em] (7.75, 0.75-0.375) -- (8+0.125, 0.75-0.375);
	\node at (7.92 + 0.0625, 0.75-0.2 + 0.04) {$\hat{\bm{\theta}}$};
\end{tikzpicture}
\caption{Processing $n$ uses of channel $\mathcal{N}^{\bm{\theta}}$ in a sequential or adaptive manner is the most general approach to channel parameter estimation. The $n$ uses of the channel are interleaved with $n-1$ quantum channels $\mathcal{S}^{1}$ through $\mathcal{S}^{n-1}$, which can also share memory systems with each other. The final measurement's outcome is then used to obtain an estimate of the unknown parameter vector~$\bm{\theta}$.  }
\label{fig:sequential strategies}
\end{figure*}
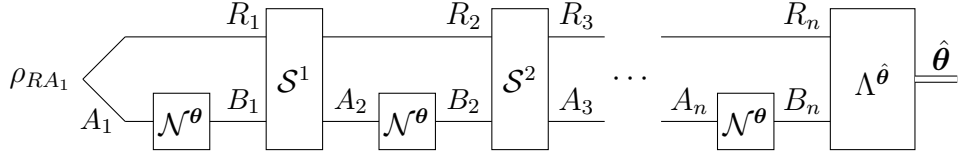

\section{Parameter Estimation}

The goal of classical parameter estimation is to obtain an estimate $\hat{\theta}$ of an unknown real parameter $\theta$ embedded in a probability distribution $p_\theta(x)$, corresponding to a random variable $X$, where we have suppressed the dependence of $X$ on $\theta$ in our notation. That is, we guess the value of $\theta$ from a realization~$x$ of the random variable $X$. The estimate $\hat{\theta}(X)$ is itself a random variable, being a function of $X$. It is common to assume unbiased estimators, i.e., $\mathbb{E} [\hat{\theta}(X)] = \theta$.
The unbiasedness condition means that the estimator is accurate and has no systematic error. Hence we focus on studying the precision of unbiased estimators.

A natural metric to benchmark the performance of an estimator is the mean-squared error $\mathbb{E}[(\hat{\theta}(X) - \theta)^2]$, abbreviated as MSE. For an unbiased estimator, the MSE is equal to the variance $\text{Var} (\hat{\theta}(X))$. The Cramer--Rao bound (CRB) places a lower bound on the variance of an unbiased estimator \cite{Cram46, Rao45, Kay93}: 
\begin{equation}
	\text{Var}(\hat{\theta}(X)) \geq \frac{1}{I_{F}(\theta; \{p_{\theta}\}_\theta)},	
\end{equation}
where $I_F(\theta; \{p_{\theta}\}_\theta)$ is the classical Fisher information, defined as
\begin{equation}
I_{F} (\theta; \{p_{\theta}\}_\theta) \coloneqq \mathbb{E}[(\partial_\theta \ln p_\theta(X))^2].
\label{eq:classical-FI}
\end{equation}
If one has $n$ independent samples $x^n$ described by the random sequence $X^n \equiv X_1, \ldots, X_n$, then the corresponding CRB is
\begin{equation}
	\text{Var}(\hat{\theta}(X^n)) \geq \frac{1}{n I_{F}(\theta; \{p_{\theta}\}_\theta)}.	
\end{equation}

At this stage, we would like to clarify that our approach adopts the frequentist approach to parameter estimation. In general, the MSE and Cramer--Rao bounds may depend on the value of the unknown parameter, in contrast with the more general paradigm of Bayesian parameter estimation \cite{Li2018}. This parameter dependence is alleviated by enforcing the unbiasedness condition.

Estimation theory has been generalized to quantum systems, where quantum probes and quantum measurements are allowed (see, e.g., \cite{Sidhu2019,DS15} for recent reviews). We then have an infinite number of logarithmic derivative operators that each reduce to the logarithmic derivative $\partial_\theta \ln p_\theta(x)$ in the classical case in \eqref{eq:classical-FI}. One such example is the right logarithmic derivative (RLD) operator, defined implicitly via the equation $\partial_{\theta}\rho_{\theta}=\rho_{\theta}R_\theta$,	
where $\{\rho_{\theta}\}_{\theta}$ is a differentiable family of quantum states. The RLD Fisher information will be formally defined in the next section. 


Consider the task of estimating a real parameter $\theta$ encoded in a quantum channel $\mathcal{N}^{\theta}_{A \rightarrow B}$ (see \cite{Sasaki2002, Fujiwara_2003, Fujiwara2004, Ji2008, Fujiwara2008, Mat10, Hayashi2011, Demkowicz-Dobrzanski2012, Kolodynski2013, Demkowicz-Dobrzanski2014, Sekatski2017, Demkowicz-Dobrzanski2017, Zhou2018, Zhou2019, Yang2020, Zhou2020} for  extensive literature on this problem). The most general setting for this problem, given $n$ uses of the channel, consists of performing the estimation via a sequential strategy \cite{Giovannetti2006,PhysRevLett.98.090501,Demkowicz-Dobrzanski2014, Yuan2016, Yuan2017, Zhou2020} as depicted in Figure~\ref{fig:sequential strategies}. Parallel estimation strategies, wherein the $n$ calls to the channel are made simultaneously, are contained within the larger family of sequential strategies. Assuming an unbiased estimator, the following quantum Cramer--Rao bound (QCRB) holds in this general setting of channel estimation \cite{KW20a}: 
\begin{equation}
	\text{Var} (\hat{\theta}) \geq \frac{1}{n \widehat{I}_F ( \theta; \{ \mathcal{N}^{\theta}_{A \rightarrow B}\}_\theta)},
	\label{eq:main-result}
\end{equation}
where $\widehat{I}_F (\theta; \{ \mathcal{N}^{\theta}_{A \rightarrow B}\}_\theta)$ is the RLD Fisher information for the parameter $\theta$ encoded in the channel family $\{ \mathcal{N}^{\theta}_{A \rightarrow B}\}_\theta$ \cite{Hayashi2011}, and we define it formally later in \eqref{eq:rld-qfi-channels}. The result in~\eqref{eq:main-result} is presented in Ref. \cite{KW20a} and is a special case of the more general result reported here in~\eqref{eq:single-letter-multi-param-crb}. A restricted subset of the most general sequential strategies consists of parallel strategies, where the $n$ uses of the channel are made simultaneously using an entangled probe state, and the bound in \eqref{eq:main-result} was previously established in \cite{Hayashi2011} for this special case. In a parallel strategy, the optimal probe state may have an undesired dependence on the unknown parameter $\theta$. This dependence, however, can be circumvented in a sequential strategy due to the ability to perform adaptive control (channels $\mathcal{S}^1$ through $\mathcal{S}^{n-1}$ in Figure~\ref{fig:sequential strategies}) between uses of the unknown channel.


The task of simultaneously estimating multiple parameters is a much more involved task than estimating a single parameter; however, Cramer--Rao bounds can still be constructed. This problem, too, has an extensive literature and a number of important recent results \cite{Hel67, Hol72, YL73, Belavkin1976a, Bagan2006, H11book, Monras2011, Humphreys2013, Yue2015, Ragy2016, Sidhu2019a, Albarelli2019, Tsang2019a, Yang2019c, Albarelli2020, Demkowicz-Dobrzanski2020a, Friel2020, Gorecki2020}. See \cite{Szczykulska2016, Albarelli2020a} for recent reviews on multiparameter estimation. In the quantum case, an additional complication is that the optimal measurements for each parameter may not be compatible. Consider that $D$ parameters need to be estimated and are encoded in a vector $\bm{\theta} \coloneqq [\theta_1 ~ \theta_2 ~ \cdots ~ \theta_D]^T$. For a differentiable family $\{ \rho_{\bm{\theta}} \}_{\bm{\theta}}$ of quantum states, one defines $D$ RLD operators using the equations $\partial_{\theta_{j}} \rho_{\bm{\theta}} = \rho_{\bm{\theta}} R_{\theta_j}$ for $j \in \{ 1, 2, \dots D\}$. The RLD Fisher information, instead of being a scalar, takes the form of a $D \times D$ matrix with elements $\widehat{I}_F(\bm{\theta};\{ \rho_{\bm{\theta}}\}_{\bm{\theta}})_{j,k}\coloneqq \operatorname{Tr}[R_{\theta_{j}}^{\dag}\rho_{\bm{\theta}}R_{\theta_{k}}]$. This leads to the following matrix Cramer--Rao bound \cite{YL73}:
\begin{equation}
	\text{Cov}(\bm{\theta}) \geq \widehat{I}_F(\bm{\theta}; \{\rho_{\bm{\theta}}\}_{\bm{\theta}})^{-1}.
	\label{eq:matrix-CRB-1}
\end{equation}
In the above, $\text{Cov} (\bm{\theta})$ is a covariance matrix with matrix elements defined as 
\begin{equation}
	[\text{Cov}(\bm{\theta})]_{j,k} = \sum_l \Tr[ M_l \rho_{\bm{\theta}} ] (\hat{\theta}_j(l) - \theta_j) (\hat{\theta}_k(l) - \theta_k),
\end{equation}
where $M_l \geq 0$ are measurement operators satisfying $\sum_l M_l = \mathbb{I}$, and
\begin{equation}
\hat{\bm{\theta}}(l) \coloneqq [\hat{\theta}_1(l) ~ \hat{\theta}_2(l) ~ \cdots ~ \hat{\theta}_D(l)]^T
\end{equation}
 is a function that maps the measurement result $l$ to an estimate of the parameters $\bm{\theta}$. 

\section{Right Logarithmic Derivative}

The reason that there are an infinite number of noncommutative Fisher information quantities is that in quantum estimation theory, the logarithmic derivative takes the form of an operator, rather than a scalar, as it does in the classical case. Each noncommutative Fisher information collapses to the scalar Fisher information in the classical case, i.e. when only diagonal density operators are involved. One particular noncommutative generalization arises from the right logarithmic derivative (RLD) operator: For a family of states $\{\rho_{\theta}\}_{\theta}$, the RLD operator $R_\theta$ is implicitly defined by the equation $\partial_{\theta}\rho_{\theta}=\rho_{\theta}R_\theta$.

The RLD operator introduced above leads to one quantum generalization of the classical logarithmic derivative \cite{YL73}. This is the one we focus on in this paper. We begin by defining it for the case of single parameter estimation. Consider a single unknown parameter $\theta$ embedded in a quantum state $\rho_\theta$. The RLD Fisher information of a differentiable family $\{\rho_{\theta}\}_{\theta}$ of states is defined as
\begin{equation}
	\widehat{I}_{F}(\theta;\{\rho_{\theta}\}_{\theta})=
		\operatorname{Tr}[(\partial_{\theta}\rho_{\theta})^{2}\rho_{\theta}^{-1}] ,
\end{equation}
if $\operatorname{supp}(\partial_{\theta}\rho_{\theta})\subseteq \operatorname{supp}(\rho_{\theta})$,
and it is set to $+\infty$ otherwise. The inverse $\rho_{\theta}^{-1}$ is taken on the support of $\rho_{\theta}$. Alternatively, if the support condition $\operatorname{supp}(\partial_{\theta}\rho_{\theta})\subseteq \operatorname{supp}(\rho_{\theta})$ is satisfied, then the RLD Fisher information can also be defined using the RLD operator $R_\theta$ as follows:
\begin{equation}
\widehat{I}_{F}(\theta;\{\rho_{\theta}\}_{\theta}) = \Tr[R_\theta^2 \rho_\theta ].
\end{equation}

The RLD Fisher information is the largest noncommutative Fisher information \cite{Petz2011}.
 This mirrors the fact that the geometric R\'enyi relative entropy is the largest R\'enyi relative entropy that satisfies the data-processing inequality \cite{Mat13}.  
 We have explored both the qualitative and quantitative connections between the RLD Fisher information and geometric R\'enyi relative entropy (and hence, between quantum channel estimation and discrimination) in detail in~\cite{KW20a}.

If, instead of a state family $\{ \rho_\theta \}_\theta$, we have a differentiable channel family $\{ \mathcal{N}^{\theta}_{A \rightarrow B} \}_\theta$, then the RLD Fisher information of this channel family is defined as
\begin{align}
	\widehat{I}_{F}(\theta;\{\mathcal{N}_{A\rightarrow B}^{\theta}\}_{\theta})& \coloneqq
		\sup_{\rho_{RA}}\widehat{I}_{F}(\theta;\{\mathcal{N}_{A\rightarrow B}^{\theta}(\rho_{RA})\}_{\theta})
	\label{eq:RLD-qfi-channels-first-def}\\
	& = \left\Vert \operatorname{Tr}_{B}[(\partial_{\theta}\Gamma_{RB}^{\mathcal{N}^{\theta}})(\Gamma_{RB}^{\mathcal{N}^{\theta}})^{-1}(\partial_{\theta} \Gamma_{RB}^{\mathcal{N}^{\theta}})]\right\Vert _{\infty} \label{eq:rld-qfi-channels},
\end{align}
 if $\operatorname{supp}(\partial_{\theta}\Gamma_{RB}^{\mathcal{N}^{\theta}
})\subseteq\operatorname{supp}(\Gamma_{RB}^{\mathcal{N}^{\theta}})$,
and it is equal to $+\infty$ otherwise, where 
\begin{equation}
\Gamma^{\mathcal{N}^{\theta}}_{RB} \coloneqq
\sum_{i,j} |i\rangle\!\langle j|_R \otimes 
\mathcal{N}^{\theta}_{A\to B}(|i\rangle\!\langle j|_A)
\end{equation} is the Choi operator of the channel~$\mathcal{N}^{\theta}_{A\to B}$. The optimization in \eqref{eq:RLD-qfi-channels-first-def} is with respect to every bipartite input state $\rho_{RA}$ that has no dependence on the parameter $\theta$, and the equality in \eqref{eq:rld-qfi-channels} was established in \cite{Hayashi2011}. Note that the optimal value in \eqref{eq:RLD-qfi-channels-first-def} is achieved by a pure bipartite state with system $R$ isomorphic to system $A$. The RLD Fisher information of a channel family $\widehat{I}_{F}(\theta;\{\mathcal{N}_{A\rightarrow B}^{\theta}\}_{\theta})$ is efficiently computable via a semi-definite program \cite{KW20a}. We also note here that what we define as the RLD Fisher information in \eqref{eq:RLD-qfi-channels-first-def} and \eqref{eq:rld-qfi-channels} is also known as the conditional Fisher information \cite{Heyde1975, Koenig2015, DePalma2017}, since we consider known probe states. 


The RLD Fisher information of isometric or unitary channels is an uninteresting information measure. This is because a differentiable family $\{\mathcal{U}_\theta\}_\theta$ of isometric or unitary channels induces a differentiable family of pure states. The RLD Fisher information in this case is either infinite or zero, neither of which lead to a useful CRB.

In multiparameter estimation, we have a differentiable family $\{ \rho_{\bm{\theta}} \}_{\bm{\theta}}$ of quantum states. Let $\Pi^{\perp}_{\rho_{\bm{\theta}}}$ denote the projection onto the kernel of $\rho_{\bm{\theta}}$. In the case that the following finiteness conditions hold
\begin{equation}
(\partial_{\theta_{j}}\rho_{\bm{\theta}})
(\partial_{\theta_{k}}\rho_{\bm{\theta}})\Pi_{\rho_{\bm{\theta}}}^{\perp}
  =0\quad\forall j,k\in\left\{  1,\ldots,D\right\},
  \label{eq:app:finiteness-RLD-matrix}
\end{equation}
the matrix elements of the multiparameter RLD\ Fisher information matrix are defined as follows:%
\begin{equation}
	[\widehat{I}_F(\bm{\theta};\{\rho_{\bm{\theta}}\}_{\bm{\theta}})]_{j,k}\coloneqq 
	\operatorname{Tr}[(\partial_{\theta_{j}}\rho_{\bm{\theta}})\rho
_{\bm{\theta}}^{-1}(\partial_{\theta_{k}}\rho_{\bm{\theta}})].
\end{equation} 
We also note here that the finiteness conditions in \eqref{eq:app:finiteness-RLD-matrix} can be equivalently written as
\begin{equation} \label{eq:alternate-finiteness-condition}
	(\partial_{\theta_{k}}\rho_{\bm{\theta}})\Pi_{\rho_{\bm{\theta}}}^{\perp}
	=0\quad\forall k\in\left\{  1,\ldots,D\right\}.
\end{equation}
To see that \eqref{eq:app:finiteness-RLD-matrix} and \eqref{eq:alternate-finiteness-condition} are equivalent, suppose in \eqref{eq:app:finiteness-RLD-matrix} that $j=k$. We then have
\begin{equation}
	(\partial_{\theta_{k}}\rho_{\bm{\theta}})
(\partial_{\theta_{k}}\rho_{\bm{\theta}})\Pi_{\rho_{\bm{\theta}}}^{\perp}
  =0\quad\forall k\in\left\{  1,\ldots,D\right\}	
\end{equation}
which implies that
\begin{equation}
	(\partial_{\theta_{k}}\rho_{\bm{\theta}})\Pi_{\rho_{\bm{\theta}}}^{\perp}
	=0\quad\forall k\in\left\{  1,\ldots,D\right\}.	
\end{equation}
Thus we have that \eqref{eq:app:finiteness-RLD-matrix}$\implies$\eqref{eq:alternate-finiteness-condition}. Next, we show that \eqref{eq:alternate-finiteness-condition}$\implies$\eqref{eq:app:finiteness-RLD-matrix}. Suppose for all $k\in\left\{  1,\ldots,D\right\}$ that
\begin{equation}
	(\partial_{\theta_{k}}\rho_{\bm{\theta}})\Pi_{\rho_{\bm{\theta}}}^{\perp} = 0.
\end{equation}
Now we multiply on both sides by $(\partial_{\theta_{j}}\rho_{\bm{\theta}})$ to yield
\begin{equation}
	(\partial_{\theta_{j}}\rho_{\bm{\theta}}) (\partial_{\theta_{k}}\rho_{\bm{\theta}})\Pi_{\rho_{\bm{\theta}}}^{\perp} = 0 \quad\forall j,k\in\left\{  1,\ldots,D\right\}	
\end{equation}
and thus we conclude that \eqref{eq:alternate-finiteness-condition}$\iff$\eqref{eq:app:finiteness-RLD-matrix}.

The RLD Fisher information matrix is then defined as follows:
\begin{align}
\widehat{I}_F(\bm{\theta};\{\rho_{\bm{\theta}}\}_{\bm{\theta}}) &
\coloneqq \sum_{j,k=1}^{D}\operatorname{Tr}[(\partial_{\theta_{j}}\rho_{\bm{\theta }}%
)\rho_{\bm{\theta}}^{-1}(\partial_{\theta_{k}}\rho_{\bm{\theta}}%
)]|j\rangle\!\langle k|\\
&  =\operatorname{Tr}_{2}\!\left[  \sum_{j,k=1}^{D}|j\rangle\!\langle
k|\otimes(\partial_{\theta_{j}}\rho_{\bm{\theta}})\rho_{\bm{\theta}}%
^{-1}(\partial_{\theta_{k}}\rho_{\bm{\theta}})\right]  ,
\end{align}
where $\Tr_2[\dots]$ refers to tracing over the second subsystem. If the finiteness conditions in \eqref{eq:app:finiteness-RLD-matrix} do not hold, then some elements of the RLD Fisher information matrix are infinite.

As stated in the previous section, the RLD Fisher information matrix is featured in the matrix quantum Cramer--Rao inequality in \eqref{eq:matrix-CRB-1}.
To obtain scalar Cramer--Rao bounds from this matrix inequality, we define a $D \times D$ positive semi-definite, unit trace weight matrix $W$ (also known as the risk matrix). The weight matrix $W$ can be chosen in accordance with the goal of the parameter estimation experiment. Particular choices of $W$ can lead to changes in the optimal estimation strategy.
\begin{remark}
Note that, throughout our paper, a weight matrix $W$ is defined to be a positive semi-definite, unit trace matrix. In general, it may have complex entries. 
\end{remark}
The
RLD\ Fisher information value is then defined as%
\begin{align}
\widehat{I}_F(\bm{\theta},W;\{\rho_{\bm{\theta}}\}_{\bm{\theta}}) &
\coloneqq \operatorname{Tr}[W\widehat{I}_F(\bm{\theta};\{\rho_{\bm{\theta}}\}_{\bm{\theta}})] \label{eq:states-rld-fisher-value} \\ 
& = \operatorname{Tr}\!\left[  (W\otimes I_{d})\left(  \sum_{j,k=1}^{D}%
|j\rangle\!\langle k|\otimes(\partial_{\theta_{j}}\rho_{\bm{\theta}}%
)\rho_{\bm{\theta}}^{-1}(\partial_{\theta_{k}}\rho_{\bm{\theta}})\right)
\right]  \\
&  =\sum_{j,k=1}^{D}\langle k|W|j\rangle\operatorname{Tr}\!\left[
(\partial_{\theta_{j}}\rho_{\bm{\theta}})\rho_{\bm{\theta}}^{-1}%
(\partial_{\theta_{k}}\rho_{\bm{\theta}})\right]  .
\end{align}

The finiteness condition for the RLD Fisher information value defined above is
\begin{equation}
\left[  \sum_{j,k=1}^{D}\langle k|W|j\rangle(\partial_{\theta_{k}}%
\rho_{\bm{\theta}})(\partial_{\theta_{j}}\rho_{\bm{\theta}})\right]  \Pi
_{\rho_{\bm{\theta}}}^{\perp}  =0, \label{eq:app:state-RLD-finiteness-cond} 
\end{equation}
where $\Pi_{\rho_{\bm{\theta}}}^{\perp}$ is the projection onto the kernel of $\rho_{\bm{\theta}}$.

\begin{theorem} \label{thm:state-rld-scalar-inequality}
	The following scalar Cramer--Rao bound holds for estimating multiple parameters $\bm{\theta}$ encoded in a family of quantum states $\{ \rho_{\bm{\theta}} \}_{\bm{\theta}}$:
	\begin{equation} \label{eq:state-rld-scalar-inequality}
	\Tr[ W \operatorname{Cov}(\bm{\theta}) ] \geq \frac{1}{   \widehat{I}_F (\bm{\theta},W;\{\rho_{\bm{\theta}}\}_{\bm{\theta}}) },
	\end{equation}
	where the weight matrix $W$ satisfies $\Tr[W]=1$.
\end{theorem}

\begin{proof}
We start with the following matrix inequality \cite{Hel76, Helstrom1973, Sidhu2019}, as recalled in \eqref{eq:matrix-CRB-1}:
\begin{equation} \label{eq:app-matrix-crb}
	\text{Cov}(\bm{\theta}) \geq \widehat{I}_F(\bm{\theta};\{\rho_{\bm{\theta}}\}_{\bm{\theta}})^{-1}.
\end{equation}
Let us take $W$ as a non-zero, positive semi-definite matrix, and then define the normalized operator~$W'$ as
\begin{equation}
	W' \coloneqq \frac{W}{\Tr[W]}.
\end{equation}
The matrix inequality in \eqref{eq:app-matrix-crb} implies the following:
\begin{align}
	\Tr[W] \Tr[W' \text{Cov}(\bm{\theta}) ] &\geq \Tr[W] \Tr[W' \widehat{I}_F(\bm{\theta};\{\rho_{\bm{\theta}}\}_{\bm{\theta}})^{-1}] \\
									  &= \Tr[W] \Tr[W'^{1/2} \widehat{I}_F(\bm{\theta};\{\rho_{\bm{\theta}}\}_{\bm{\theta}})^{-1} W'^{1/2}] \\
& = \Tr[W] \sum_k \langle k | W'^{1/2} \widehat{I}_F(\bm{\theta};\{\rho_{\bm{\theta}}\}_{\bm{\theta}})^{-1} W'^{1/2} | k \rangle
\\									  &\geq 
\Tr[W]  \left[\sum_k \langle k | W'^{1/2} \widehat{I}_F(\bm{\theta};\{\rho_{\bm{\theta}}\}_{\bm{\theta}}) W'^{1/2} | k \rangle\right]^{-1}
\\& = \frac{\Tr[W]}{ \Tr[ W'^{1/2} \widehat{I}_F(\bm{\theta};\{\rho_{\bm{\theta}}\}_{\bm{\theta}}) W'^{1/2} ]  } \\
									  &= \frac{\Tr[W]}{ \Tr[W' \widehat{I}_F(\bm{\theta};\{\rho_{\bm{\theta}}\}_{\bm{\theta}})] } \\
									  &= \frac{\Tr[W]^2}{ \Tr[W \widehat{I}_F(\bm{\theta};\{\rho_{\bm{\theta}}\}_{\bm{\theta}})] }.
\end{align}
The first inequality is a consequence of \eqref{eq:app-matrix-crb}. The first equality follows from cyclicity of trace. The second inequality uses the operator Jensen inequality \cite{HP03} for the operator convex function $f(x) = x^{-1}$. This inequality is saturated if and only if $\widehat{I}_F(\bm{\theta};\{\rho_{\bm{\theta}}\}_{\bm{\theta}})$ is diagonal in the eigenbasis of $W$, i.e., $\left[\widehat{I}_F(\bm{\theta};\{\rho_{\bm{\theta}}\}_{\bm{\theta}}), W  \right] = 0$. The next equality comes again from cyclicity of trace, and the last equality comes from the definition of $W'$.

The reasoning above leads us to
\begin{equation}
	\Tr[W \text{Cov} (\bm{\theta})] \geq \frac{(\Tr[W])^2}{\Tr[W \widehat{I}_F(\bm{\theta};\{\rho_{\bm{\theta}}\}_{\bm{\theta}})]}
\end{equation}
for every positive semi-definite matrix $W$, 
which implies that
\begin{equation} 
	\Tr[W' \text{Cov} (\bm{\theta})] \geq \frac{1}{\Tr[W' \widehat{I}_F(\bm{\theta};\{\rho_{\bm{\theta}}\}_{\bm{\theta}})]}
\end{equation}
for every positive semi-definite matrix $W'$ such that $\Tr[W'] = 1$.
\end{proof}

We note here that an alternate scalar Cramer--Rao bound involving the RLD Fisher information is the following \cite{YL73, H11book}:
\begin{equation} \label{eq:nagaoka-rld-bound}
		\Tr[ W \operatorname{Cov}(\bm{\theta}) ] \geq \Tr\!\left[ W \text{Re}(\widehat{I}_F(\bm{\theta};\{\rho_{\bm{\theta}}\}_{\bm{\theta}})^{-1}) \right] + \Tr\!\left[ | W^{1/2} \text{Im} (\widehat{I}_F(\bm{\theta};\{\rho_{\bm{\theta}}\}_{\bm{\theta}})^{-1}) W^{1/2} | \right].
\end{equation}
For a matrix $X$, here we define
\begin{equation}
\text{Re}(X) \coloneqq \frac{1}{2} (X + \bar{X}),
\qquad
\text{Im}(X) \coloneqq \frac{1}{2i}(X - \bar{X}),
\end{equation}
where $\bar{X}$ is the matrix with entries that are the complex conjugates of the entries of $X$. This bound, similar to our bound in \eqref{eq:state-rld-scalar-inequality}, arises from the matrix inequality \eqref{eq:app-matrix-crb}. Our bound in \eqref{eq:state-rld-scalar-inequality} is, in general, looser than \eqref{eq:nagaoka-rld-bound}. 
Our bound in \eqref{eq:state-rld-scalar-inequality} is explicitly written as
\begin{equation}
	\Tr[W \widehat{I}_F(\bm{\theta};\{\rho_{\bm{\theta}}\}_{\bm{\theta}})] = \Tr\!\left[W \text{Re} \! \left(\widehat{I}_F(\bm{\theta};\{\rho_{\bm{\theta}}\}_{\bm{\theta}})\right)\right] + i \Tr\!\left[W \text{Im} \! \left(\widehat{I}_F(\bm{\theta};\{\rho_{\bm{\theta}}\}_{\bm{\theta}})\right)\right],
\end{equation}
wherein it is clear that the major difference between it and \eqref{eq:nagaoka-rld-bound} is the absence of any matrix inverses.
Our bound's relative looseness arises because of the use of the operator Jensen inequality (whose details are in the proof of Theorem \ref{thm:state-rld-scalar-inequality}). 
Despite this, our approach has the advantage that it can incorporate the chain rule and amortization collapse for the RLD Fisher information, which are discussed in detail in Section \ref{sec:amortized-fisher-information}. This is directly responsible for the single-letter nature of our bound in \eqref{eq:single-letter-multi-param-crb} and its applicability to general sequential estimation strategies. For state estimation tasks in which the statistical model obeys a condition called $D$-invariance \cite{H11book}, we note here that the RLD-based bound \eqref{eq:nagaoka-rld-bound} is both achievable and the most informative Cramer--Rao bound \cite{Suzuki2016}. 

The RLD Fisher information value of quantum states, as defined in \eqref{eq:states-rld-fisher-value}, is computable via a semi-definite program.

\begin{proposition}
	Let $\{\rho_{A}^{\bm{\theta}}\}_{\bm{\theta}}$ be a differentiable family of quantum states, and let $W$ be a $D\times D$ weight matrix.
	Suppose that \eqref{eq:app:state-RLD-finiteness-cond} holds.
	Then the RLD Fisher information value of quantum states can be calculated via the following semi-definite program:
	\begin{multline} \label{eq:primal-rld-value-states}
	\widehat{I}_{F}(\bm{\theta}, W; \{\rho_{A}^{\bm{\theta}}\})=\inf\Big\{  \operatorname{Tr}
	[(W_F \otimes I_A)M_{FA}]:M_{FA}\geq0, \\
	\begin{bmatrix}
	M_{FA} & \sum_{j =1}^D \ket{j}_F \otimes (\partial_{\theta_j} \rho^{{\bm{\theta}}}_{A} )\\
	\sum_{j =1}^D \bra{j}_F \otimes (\partial_{\theta_j} \rho^{{\bm{\theta}}}_{A} ) & \rho_A^{\bm{\theta}}
	\end{bmatrix}
	\geq0\Big\}  .
	\end{multline}
	
	The dual program is given by
	\begin{equation} \label{eq:dual-rld-value-states}
	\sup_{P_{FA}, Q_{FA\to A}, R_A} 2 \left(\sum_{j=1}^D \operatorname{Re}[\operatorname{Tr}
	[ Q_{FA\to A} ( | j \rangle_F \otimes  (\partial_{\theta_j} \rho_A^{\bm{\theta}})) ]] \right)-\operatorname{Tr}[R_A \rho_A^{\bm{\theta}}],
	\end{equation}
	subject to
	\begin{equation}
	P_{FA} \leq (W_F \otimes I_A),\quad
	\begin{bmatrix}
	P_{FA} & (Q_{FA\to A})^{\dag}\\
	Q_{FA\to A} & R_A
	\end{bmatrix}
	\geq0,
	\end{equation}
	where $P_{FA}$ and $R_A$ are Hermitian, and $Q_{FA\to A}$ is a linear operator.
\end{proposition}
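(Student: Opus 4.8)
The plan is to identify the primal semi-definite program \eqref{eq:primal-rld-value-states} with the explicit expression for $\widehat{I}_{F}(\bm{\theta},W;\{\rho_{A}^{\bm{\theta}}\})$ established in Appendix~\ref{app:rld-fisher-info}, and then to derive the dual \eqref{eq:dual-rld-value-states} by Lagrangian duality, finishing with a strong-duality argument. First I would abbreviate $X_{FA}\coloneqq\sum_{j=1}^{D}\ket{j}_{F}\otimes(\partial_{\theta_{j}}\rho_{A}^{\bm{\theta}})$, so that the block-operator constraint in \eqref{eq:primal-rld-value-states} reads $\bigl(\begin{smallmatrix}M_{FA}&X_{FA}\\X_{FA}^{\dag}&\rho_{A}^{\bm{\theta}}\end{smallmatrix}\bigr)\geq 0$, and apply the generalized Schur complement lemma: since $\rho_{A}^{\bm{\theta}}\geq 0$, this constraint is equivalent to the support condition $\Pi_{\rho^{\bm{\theta}}}^{\perp}X_{FA}^{\dag}=0$ together with $M_{FA}\geq X_{FA}(\rho_{A}^{\bm{\theta}})^{-1}X_{FA}^{\dag}$, the inverse being taken on $\operatorname{supp}(\rho_{A}^{\bm{\theta}})$. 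Under the finiteness hypothesis \eqref{eq:app:state-RLD-finiteness-cond} (together with the analogue of \eqref{eq:app:finiteness-RLD-matrix}, which pins down the supports of the $\partial_{\theta_{j}}\rho_{A}^{\bm{\theta}}$) this support condition holds, so the feasible set is nonempty. Because $W\otimes I_{A}\geq 0$, the objective $M_{FA}\mapsto\operatorname{Tr}[(W\otimes I_{A})M_{FA}]$ is monotone in the positive semi-definite order, so its infimum is attained at the smallest feasible point $M_{FA}^{\star}=X_{FA}(\rho_{A}^{\bm{\theta}})^{-1}X_{FA}^{\dag}$; substituting this and using $(\partial_{\theta_{k}}\rho^{\bm{\theta}})^{\dag}=\partial_{\theta_{k}}\rho^{\bm{\theta}}$ together with $\operatorname{Tr}[(W\otimes I_{A})(\ket{j}\!\bra{k}_{F}\otimes Y_{A})]=\langle k|W|j\rangle\operatorname{Tr}[Y_{A}]$ collapses the objective to $\sum_{j,k}\langle k|W|j\rangle\operatorname{Tr}[(\partial_{\theta_{j}}\rho^{\bm{\theta}})(\rho^{\bm{\theta}})^{-1}(\partial_{\theta_{k}}\rho^{\bm{\theta}})]$, which is exactly $\widehat{I}_{F}(\bm{\theta},W;\{\rho_{A}^{\bm{\theta}}\})$.

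For the dual, I would dualize in the standard SDP fashion: introduce a positive semi-definite multiplier $\bigl(\begin{smallmatrix}P_{FA}&Q_{FA}^{\dag}\\Q_{FA}&R_{A}\end{smallmatrix}\bigr)\geq 0$ for the block constraint and $Z_{FA}\geq 0$ for $M_{FA}\geq 0$, and minimize the Lagrangian over the otherwise free Hermitian $M_{FA}$. This produces the stationarity relation $W\otimes I_{A}=P_{FA}+Z_{FA}$; eliminating $Z_{FA}\geq 0$ yields the constraint $P_{FA}\leq W\otimes I_{A}$, and the surviving terms are $2\sum_{j=1}^{D}\operatorname{Re}\operatorname{Tr}[Q_{FA}(\ket{j}_{F}\otimes(\partial_{\theta_{j}}\rho_{A}^{\bm{\theta}}))]-\operatorname{Tr}[R_{A}\rho_{A}^{\bm{\theta}}]$, i.e., exactly \eqref{eq:dual-rld-value-states}. (The sign in front of $Q_{FA}$ is immaterial, since conjugating the block multiplier by $\operatorname{diag}(I_{FA},-I_{A})$ preserves positivity.)

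It remains to rule out a duality gap. Weak duality is immediate from the Lagrangian construction. For the matching bound I would either invoke Slater's condition --- working, if necessary, on $\operatorname{supp}(\rho_{A}^{\bm{\theta}})$ and $\operatorname{supp}(W)$ --- or exhibit the complementary-slackness-optimal dual point directly: take $P_{FA}=W\otimes I_{A}$, $Q_{FA}$ proportional to $(\rho_{A}^{\bm{\theta}})^{-1}X_{FA}^{\dag}(W\otimes I_{A})$, and $R_{A}=Q_{FA}(W\otimes I_{A})^{-1}Q_{FA}^{\dag}$, and verify using the support conditions that it is feasible with objective equal to $\operatorname{Tr}[(W\otimes I_{A})X_{FA}(\rho_{A}^{\bm{\theta}})^{-1}X_{FA}^{\dag}]$, the primal value.

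I expect the main obstacle to be the support and generalized-inverse bookkeeping: one has to check carefully that the stated finiteness hypotheses are exactly what guarantees $\Pi_{\rho^{\bm{\theta}}}^{\perp}X_{FA}^{\dag}=0$ (so that $M_{FA}^{\star}$ is feasible and the Schur complement applies), and that the strong-duality step is not covertly assuming $\rho_{A}^{\bm{\theta}}$ or $W$ to be of full rank.
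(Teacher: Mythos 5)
Your proposal is correct and follows essentially the same route as the paper: the primal form comes from writing $\widehat{I}_F(\bm{\theta},W;\{\rho_A^{\bm{\theta}}\})$ as $\operatorname{Tr}[(W_F\otimes I_A)X^{\dag}Y^{-1}X]$ and invoking the Schur complement characterization $X^{\dag}Y^{-1}X=\min\{M:\bigl(\begin{smallmatrix}M&X^{\dag}\\X&Y\end{smallmatrix}\bigr)\geq0\}$ together with monotonicity of the weighted trace, and the dual comes from standard SDP duality (the paper simply cites \cite[Lemma~53]{KW20a} for this step, whereas you carry out the Lagrangian computation, much as the paper does explicitly for the channel version). Your extra care about the support bookkeeping --- in particular that the weighted condition \eqref{eq:app:state-RLD-finiteness-cond} alone need not force $(\partial_{\theta_j}\rho_A^{\bm{\theta}})\Pi_{\rho^{\bm{\theta}}}^{\perp}=0$ for every $j$ when $W$ is rank-deficient, which is what the block constraint actually requires --- and your explicit strong-duality witness are points the paper glosses over, and they are correct refinements rather than deviations.
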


\begin{proof}
	We begin with the formula
	\begin{align}
	\widehat{I}(\bm{\theta},W;\{\rho^{\bm{\theta}}_A\}_{\bm{\theta}}) &
	:=\operatorname{Tr}\!\left[  (W_F \otimes I_{A})\left(  \sum_{j,k=1}^{D}%
	|j\rangle\!\langle k|_F \otimes(\partial_{\theta_{j}}\rho_A^{\bm{\theta}}%
	)(\rho_A^{\bm{\theta}})^{-1}(\partial_{\theta_{k}}\rho_A^{\bm{\theta}})\right)
	\right]  \\
	&  =\sum_{j,k=1}^{D}\langle k|W|j\rangle\operatorname{Tr}\!\left[
	(\partial_{\theta_{j}}\rho_A^{\bm{\theta}})(\rho_A^{\bm{\theta}})^{-1}%
	(\partial_{\theta_{k}}\rho_A^{\bm{\theta}})\right]  .
	\end{align}
	The above
	can be written as
	\begin{equation}
	\Tr \left[ (W_F \otimes I_{A}) (X^{\dag} Y^{-1} X) \right]
	\end{equation}
	where $X = \sum_{j =1}^D \bra{j}_F \otimes (\partial_{\theta_j} \rho^{{\bm{\theta}}}_{A} )$ and $Y = \rho_A^{\bm{\theta}}$.
	
	We next use the Schur complement lemma,
	\begin{equation}
	\begin{bmatrix}
	M & X^{\dag}\\
	X & Y
	\end{bmatrix}
	\geq0\qquad\Longleftrightarrow\qquad Y\geq0,\quad M\geq X^{\dag}Y^{-1}X
	\end{equation}
	which lets us write 
	\begin{equation}
	X^{\dag}Y^{-1}X=\min\left\{  M:
	\begin{bmatrix}
	M & X^{\dag}\\
	X & Y
	\end{bmatrix}
	\geq0\right\},
	\end{equation}
	where the ordering is understood in the L\"owner sense.
	
	Combining the above, we obtain the desired primal form in \eqref{eq:primal-rld-value-states}.
	
	To obtain the dual program in \eqref{eq:dual-rld-value-states}, we apply Lemma \ref{lemma:dual-form}, which we provide in Appendix~\ref{app:lemma-dual-form}.
\end{proof}

\subsection{RLD Fisher Information Value of Quantum Channels}

The RLD Fisher information value for a differentiable family $\{\mathcal{N}_{A\rightarrow
B}^{\bm{\theta}}\}_{\bm{\theta}}$ of quantum channels is defined as%
\begin{equation}
\widehat{I}_F(\bm{\theta},W;\{\mathcal{N}_{A\rightarrow B}^{\bm{\theta}}%
\}_{\bm{\theta}})\coloneqq \sup_{\rho_{RA}}\widehat{I}_F(\bm{\theta },W;\{\mathcal{N}%
_{A\rightarrow B}^{\bm{\theta}}(\rho_{RA})\}_{\bm{\theta}}),
\end{equation}
where the optimization is with respect to every bipartite state $\rho_{RA}$ with system $R$ arbitrarily large. 
However, note that, by a standard argument, it suffices to optimize over pure states $\psi_{RA}$ with system $R$ isomorphic to the channel input system $A$.

The finiteness condition for the channel RLD Fisher information value is

\begin{equation}
\left[  \sum_{j,k=1}^{D}\langle k|W|j\rangle(\partial_{\theta_{k}}\Gamma
_{RB}^{\mathcal{N}^{\bm{\theta}}})(\partial_{\theta_{j}}\Gamma_{RB}%
^{\mathcal{N}^{\bm{\theta}}})\right]  \Pi_{\Gamma^{\mathcal{N}^{\bm{\theta}}}%
}^{\perp}  =0,
\label{eq:app:channel-RLD-finiteness-cond}
\end{equation}
where $\Pi_{\Gamma^{\mathcal{N}^{\bm{\theta}}}}^{\perp}$ is the projection onto the kernel of $\Gamma^{\mathcal{N}^{\bm{\theta}}}_{RB}$, with $\Gamma^{\mathcal{N}^{\bm{\theta}}}_{RB}$ the Choi operator of the channel $\mathcal{N}^{\bm{\theta}}_{A\to B}$.

\begin{proposition}
\label{prop:geo-fish-explicit-formula-1st-order}Let $\{\mathcal{N}%
_{A\rightarrow B}^{\bm{\theta}}\}_{\bm{\theta}}$ be a differentiable family of
quantum channels, and let $W$ be a $D\times D$ weight matrix. Suppose that \eqref{eq:app:channel-RLD-finiteness-cond} holds. Then the 
RLD\ Fisher information value of quantum channels has the following explicit form:%
\begin{equation}
\widehat{I}_{F}(\bm{\theta},W;\{\mathcal{N}_{A\rightarrow B}^{\bm{\theta}}%
\}_{\bm{\theta}})=\left\Vert \sum_{j,k=1}^{D}\langle k|W|j\rangle
\operatorname{Tr}_{B}[(\partial_{\theta_{j}}\Gamma_{RB}^{\mathcal{N}%
^{\bm{\theta}}})(\Gamma_{RB}^{\mathcal{N}^{\bm{\theta}}})^{-1}(\partial
_{\theta_{k}}\Gamma_{RB}^{\mathcal{N}^{\bm{\theta}}})]\right\Vert _{\infty}.
\label{eq:RLD-value-channels-inf-norm}
\end{equation}
\end{proposition}

\begin{proof}
Recall that every pure state $\psi_{RA}$ can be written as%
\begin{equation}
\psi_{RA}=Z_{R}\Gamma_{RA}Z_{R}^{\dag},
\end{equation}
where $Z_{R}$ is a square operator satisfying $\operatorname{Tr}[Z_{R}^{\dag
}Z_{R}]=1$. This implies that%
\begin{align}
\mathcal{N}_{A\rightarrow B}^{\bm{\theta}}(\psi_{RA}) &  =\mathcal{N}%
_{A\rightarrow B}^{\bm{\theta}}(Z_{R}\Gamma_{RA}Z_{R}^{\dag})\\
&  =Z_{R}\mathcal{N}_{A\rightarrow B}^{\bm{\theta}}(\Gamma_{RA})Z_{R}^{\dag}\\
&  =Z_{R}\Gamma_{RB}^{\mathcal{N}^{\bm{\theta}}}Z_{R}^{\dag}.
\end{align}
It suffices to optimize over pure states $\psi_{RA}$ such that $\psi_{A}>0$
because these states are dense in the set of all pure bipartite states. Then
consider that%
\begin{align}
&  \sup_{\psi_{RA}}\widehat{I}_{F}(\bm{\theta},W;\{\mathcal{N}_{A\rightarrow
B}^{\bm{\theta}}(\psi_{RA})\}_{\bm{\theta}})\nonumber\\
&  =\sup_{\psi_{RA}}\sum_{j,k=1}^{D}\langle k|W|j\rangle\operatorname{Tr}%
\left[  (\partial_{\theta_{j}}\mathcal{N}_{A\rightarrow B}^{\bm{\theta}}%
(\psi_{RA}))(\mathcal{N}_{A\rightarrow B}^{\bm{\theta}}(\psi_{RA}%
))^{-1}(\partial_{\theta_{k}}\mathcal{N}_{A\rightarrow B}^{\bm{\theta}}%
(\psi_{RA}))\right]  \\
&  =\sup_{Z_{R}:\operatorname{Tr}[Z_{R}^{\dag}Z_{R}]=1}\sum_{j,k=1}^{D}\langle
k|W|j\rangle\operatorname{Tr}\!\left[  (\partial_{\theta_{j}}Z_{R}\Gamma
_{RB}^{\mathcal{N}^{\bm{\theta}}}Z_{R}^{\dag})(Z_{R}\Gamma_{RB}^{\mathcal{N}%
^{\bm{\theta}}}Z_{R}^{\dag})^{-1}(\partial_{\theta_{k}}Z_{R}\Gamma
_{RB}^{\mathcal{N}^{\bm{\theta}}}Z_{R}^{\dag})\right]  \\
&  =\sup_{Z_{R}:\operatorname{Tr}[Z_{R}^{\dag}Z_{R}]=1}\sum_{j,k=1}^{D}\langle
k|W|j\rangle\operatorname{Tr}\!\left[  Z_{R}(\partial_{\theta_{j}}\Gamma
_{RB}^{\mathcal{N}^{\bm{\theta}}})Z_{R}^{\dag}Z_{R}^{-\dag}(\Gamma
_{RB}^{\mathcal{N}^{\bm{\theta}}})^{-1}Z_{R}^{-1}Z_{R}(\partial_{\theta_{k}%
}\Gamma_{RB}^{\mathcal{N}^{\bm{\theta}}})Z_{R}^{\dag}\right]  \\
&  =\sup_{Z_{R}:\operatorname{Tr}[Z_{R}^{\dag}Z_{R}]=1}\sum_{j,k=1}^{D}\langle
k|W|j\rangle\operatorname{Tr}\!\left[  Z_{R}^{\dag}Z_{R}(\partial_{\theta_{j}%
}\Gamma_{RB}^{\mathcal{N}^{\bm{\theta}}})(\Gamma_{RB}^{\mathcal{N}%
^{\bm{\theta}}})^{-1}(\partial_{\theta_{k}}\Gamma_{RB}^{\mathcal{N}%
^{\bm{\theta}}})\right]  \\
&  =\sup_{Z_{R}:\operatorname{Tr}[Z_{R}^{\dag}Z_{R}]=1}\operatorname{Tr}\!
\left[  Z_{R}^{\dag}Z_{R}\sum_{j,k=1}^{D}\langle k|W|j\rangle\operatorname{Tr}%
_{B}\!\left[  (\partial_{\theta_{j}}\Gamma_{RB}^{\mathcal{N}^{\bm{\theta}}%
})(\Gamma_{RB}^{\mathcal{N}^{\bm{\theta}}})^{-1}(\partial_{\theta_{k}}%
\Gamma_{RB}^{\mathcal{N}^{\bm{\theta}}})\right]  \right]  \\
&  =\left\Vert \sum_{j,k=1}^{D}\langle k|W|j\rangle\operatorname{Tr}%
_{B}\!\left[  (\partial_{\theta_{j}}\Gamma_{RB}^{\mathcal{N}^{\bm{\theta}}%
})(\Gamma_{RB}^{\mathcal{N}^{\bm{\theta}}})^{-1}(\partial_{\theta_{k}}%
\Gamma_{RB}^{\mathcal{N}^{\bm{\theta}}})\right]  \right\Vert _{\infty}%
\end{align}
The last equality is a consequence of the characterization of the infinity
norm of a positive semi-definite operator $Y$ as $\left\Vert Y\right\Vert
_{\infty}=\sup_{\rho>0,\operatorname{Tr}[\rho]=1}\operatorname{Tr}[Y\rho]$.
\end{proof}

\bigskip

As in the case of the RLD Fisher information value for quantum states, the analogous quantity for channels can also be computed via a semi-definite program.

\begin{proposition} \label{prop:rld-channels-sdp}
	Let $\{\mathcal{N}_{A\rightarrow B}^{\bm{\theta}}\}_{\bm{\theta}}$ be a differentiable family of quantum channels, and let $W$ be a $D\times D$ weight matrix. Suppose that \eqref{eq:app:channel-RLD-finiteness-cond} holds. Then the RLD\ Fisher information value of quantum channels can be calculated via the following semi-definite program:
	\begin{equation} \label{eq:primal-rld-value-channels}
	\widehat{I}_{F}(\bm{\theta},W;\{\mathcal{N}_{A\rightarrow B}^{\bm{\theta}}%
	\}_{\bm{\theta}})=\inf\lambda\in\mathbb{R}^{+}, 
	\end{equation}
	subject to
	\begin{equation} \label{eq:primal-rld-value-channels-2}
	\lambda I_{R}\geq\operatorname{Tr}_{FB}[(W_F \otimes I_{RB} ) M_{FRB}], \qquad
	\begin{bmatrix}
	M_{FRB} & \sum_j | j \rangle_F \otimes  (\partial_{\theta_j} \Gamma^{\mathcal{N}^{\bm{\theta}}}_{RB} )    \\
	\sum_j  \langle j |_F \otimes (\partial_{\theta_j} \Gamma^{\mathcal{N}^{\bm{\theta}}}_{RB} )  & \Gamma^{\mathcal{N}^{\bm{\theta}}}_{RB}
	\end{bmatrix}
	\geq0. 
	\end{equation}
	
	The dual program is given by
	\begin{equation}
	\sup_{\rho_{R}\geq0,P_{FRB},Z_{FRB\to RB},Q_{RB}} 2 \left( \sum_{j=1}^D \operatorname{Re}[\operatorname{Tr}
	[Z_{FRB\to RB} ( | j \rangle_F \otimes  (\partial_{\theta_j} \Gamma^{\mathcal{N}^{\bm{\theta}}}_{RB} )) ]] \right) -\operatorname{Tr}[Q_{RB}\Gamma_{RB}^{\mathcal{N}^{\bm{\theta}}}],
	\end{equation}
	subject to
	\begin{equation}
	\operatorname{Tr}[\rho_{R}]\leq1,\quad
	\begin{bmatrix}
	P_{FRB} & (Z_{FRB\to RB})^{\dag}\\
	Z_{FRB\to RB} & Q_{RB}
	\end{bmatrix}
	\geq0,\quad P_{FRB}\leq\rho_{R} \otimes W_F \otimes I_{B}.
	\end{equation}
		
\end{proposition}

\begin{proof}
	The form of the primal program relies on the combination of a few facts. First, we use the following characterization of the infinity norm of a positive semi-definite operator $A$:
	\begin{equation}
	\left\Vert A\right\Vert _{\infty}=\inf\left\{  \lambda\geq0:A\leq\lambda I\right\}  .
	\end{equation}	
	
	Next we observe that 
	\begin{equation}
	\sum_{j,k=1}^{D}\langle k|W|j\rangle \operatorname{Tr}_{B}[(\partial_{\theta_{j}}\Gamma_{RB}^{\mathcal{N}^{\bm{\theta}}})(\Gamma_{RB}^{\mathcal{N}^{\bm{\theta}}})^{-1}(\partial_{\theta_{k}}\Gamma_{RB}^{\mathcal{N}^{\bm{\theta}}})]
	\end{equation}
	can be written as
	\begin{equation}
	\Tr_{FB} \!\left[ (W_F \otimes I_{RB}) (X^{\dag} Y^{-1} X) \right]
	\end{equation}
	where $X = \sum_{j =1}^D \bra{j}_F \otimes (\partial_{\theta_j} \Gamma^{\mathcal{N}^{\bm{\theta}}}_{RB} )$ and $Y = \Gamma^{\mathcal{N}^{\bm{\theta}}}_{RB}$.
	
	We next use the Schur complement lemma,
	\begin{equation}
	\begin{bmatrix}
	M & X^{\dag}\\
	X & Y
	\end{bmatrix}
	\geq0\qquad\Longleftrightarrow\qquad Y\geq0,\quad M\geq X^{\dag}Y^{-1}X
	\end{equation}
	which lets us write 
	\begin{equation}
	X^{\dag}Y^{-1}X=\min\left\{  M:
	\begin{bmatrix}
	M & X^{\dag}\\
	X & Y
	\end{bmatrix}
	\geq0\right\}
	\end{equation}
	
	Combining the above with \eqref{eq:RLD-value-channels-inf-norm}, we obtain the desired primal form in \eqref{eq:primal-rld-value-channels}.
	
	To arrive at the dual program, we use the standard forms of primal and dual
	semi-definite programs for Hermitian operators $A$ and $B$ and a
	Hermiticity-preserving map $\Phi$ \cite{Wat18}:
	\begin{equation}
	\sup_{X\geq0}\left\{  \operatorname{Tr}[AX]:\Phi(X)\leq B\right\}  ,
	\qquad\inf_{Y\geq0}\left\{  \operatorname{Tr}[BY]:\Phi^{\dag}(Y)\geq
	A\right\}  . \label{eq:standard-SDP-form-RLD-ch-helper-channels}
	\end{equation}
	From \eqref{eq:primal-rld-value-channels}--\eqref{eq:primal-rld-value-channels-2}, we
	identify
	\begin{align}
		B  &  =
		\begin{bmatrix}
			1 & 0\\
			0 & 0
		\end{bmatrix},
		\quad Y=
		\begin{bmatrix}
			\lambda & 0\\
			0 & M_{FRB}
		\end{bmatrix}
		,\quad\Phi^{\dag}(Y)=
		\begin{bmatrix}
			\lambda I_{R}-\operatorname{Tr}_{FB}[(W_F \otimes I_{RB})M_{FRB}] & 0 & 0\\
			0 & M_{FRB} & 0\\
			0 & 0 & 0
		\end{bmatrix}
		,\\
		A  &  =
		\begin{bmatrix}
			0 & 0 & 0\\
			0 & 0 & -\sum_{j=1}^D | j \rangle_F \otimes  (\partial_{\theta_j} \Gamma^{\mathcal{N}^{\bm{\theta}}}_{RB} )\\
			0 & -\sum_{j=1}^{D} \bra{j}_F \otimes  (\partial_{\theta_j} \Gamma^{\mathcal{N}^{\bm{\theta}}}_{RB} ) & -\Gamma^{\mathcal{N}^{\bm{\theta}}}_{RB}
		\end{bmatrix}
		.	
	\end{align}

	Upon setting
	\begin{equation}
	X=
	\begin{bmatrix}
	\rho_{R} & 0 & 0\\
	0 & P_{FRB} & (Z_{FRB\to RB})^{\dag}\\
	0 & Z_{FRB\to RB} & Q_{RB}
	\end{bmatrix}
	,
	\end{equation}
	we find that
	\begin{align}
		& \operatorname{Tr}[X\Phi^{\dag}(Y)]  \notag \\
		&  =\operatorname{Tr}\!\left[
		\begin{bmatrix}
			\rho_{R} & 0 & 0\\
			0 & P_{FRB} & (Z_{FRB\to RB})^{\dag}\\
			0 & Z_{FRB\to RB} & Q_{RB}
		\end{bmatrix}
		\begin{bmatrix}
			\lambda I_{R}-\operatorname{Tr}_{FB}[(W_F \otimes I_{RB})M_{FRB}] & 0 & 0\\
			0 & M_{FRB} & 0\\
			0 & 0 & 0
		\end{bmatrix}
		\right] \\
		&  =\operatorname{Tr}[\rho_{R}(\lambda I_{R}-\operatorname{Tr}_{FB}
		[(W_F \otimes I_{RB})M_{FRB}])]+\operatorname{Tr}[P_{FRB}M_{FRB}]\\
		&  =\lambda\operatorname{Tr}[\rho_{R}]+\operatorname{Tr}[(P_{FRB}-\rho
		_{R}\otimes W_F \otimes I_{B})M_{FRB}]\\
		&  =\operatorname{Tr}\!\left[
		\begin{bmatrix}
			\lambda & 0\\
			0 & M_{FRB}
		\end{bmatrix}
		\begin{bmatrix}
			\operatorname{Tr}[\rho_{R}] & 0\\
			0 & P_{FRB}-\rho_{R} \otimes W_F \otimes I_{B}
		\end{bmatrix}
		\right]  \\
		&= \Tr[Y \Phi(X)]
		,
	\end{align}
	to find that the dual is given by
	\begin{equation}
	\sup_{\substack{\rho_{R},P_{FRB},\\Z_{FRB\to RB},Q_{RB}}}\operatorname{Tr}\left[AX
	\right]  ,
	\end{equation}
	subject to
	\begin{equation}
	\begin{bmatrix}
	\rho_{R} & 0 & 0\\
	0 & P_{FRB} & (Z_{FRB\to RB})^{\dag}\\
	0 & Z_{FRB\to RB} & Q_{RB}
	\end{bmatrix}
	\geq0,\qquad
	\begin{bmatrix}
	\operatorname{Tr}[\rho_{R}] & 0\\
	0 & P_{FRB}-\rho_{R} \otimes W_F \otimes I_{B}
	\end{bmatrix}
	\leq
	\begin{bmatrix}
	1 & 0\\
	0 & 0
	\end{bmatrix}
	.
	\end{equation}
	We can swap $Z_{FRB\to RB} \rightarrow - Z_{FRB\to RB}$ with no change to the optimal value. This leads to the following simplified form of the dual program:
	\begin{equation}
	\sup_{\rho_{R}\geq0,P_{FRB},Z_{FRB},Q_{RB}}2 \sum_{j=1}^D \operatorname{Re}[\operatorname{Tr}
	[Z_{FRB\to RB} ( | j \rangle_F \otimes  \partial_{\theta_j} \Gamma^{\mathcal{N}^{\bm{\theta}}}_{RB}  ) ]]-\operatorname{Tr}[Q_{RB}\Gamma_{RB}^{\mathcal{N}^{\bm{\theta}}}],
	\end{equation}
	subject to
	\begin{equation}
	\operatorname{Tr}[\rho_{R}]\leq1,\quad
	\begin{bmatrix}
	P_{FRB} & -(Z_{FRB\to RB})^{\dag}\\
	-Z_{FRB\to RB} & Q_{RB}
	\end{bmatrix}
	\geq0,\quad P_{FRB}\leq\rho_{R} \otimes W_F \otimes I_{B}.
	\end{equation}
	Then we note that
	\begin{equation}
	\begin{bmatrix}
	P_{FRB} & -(Z_{FRB\to RB})^{\dag}\\
	-Z_{FRB\to RB} & Q_{RB}
	\end{bmatrix}
	\geq0 \quad\Longleftrightarrow\quad
	\begin{bmatrix}
	P_{FRB} & (Z_{FRB\to RB})^{\dag}\\
	Z_{FRB\to RB} & Q_{RB}
	\end{bmatrix}
	\geq0
	\end{equation}
	This concludes the proof.
\end{proof}

\section{Amortized Fisher Information} \label{sec:amortized-fisher-information}

Amortized channel divergences were defined in \cite{Berta2018c} to provide a mathematical framework for studying the power of sequential strategies over parallel ones in quantum channel discrimination. With the view of performing a similar comparison for quantum channel estimation, we define the amortized Fisher information for quantum channel families \cite{KW20a}. For the channel family $\{ \mathcal{N}_{A \rightarrow B}^{\theta} \}_{\theta}$, the amortized RLD Fisher information is defined as

\begin{equation}
	\widehat{I}_{F}^{\mathcal{A}}(\theta;\{\mathcal{N}_{A\rightarrow B}^{\theta}\}_{\theta})\coloneqq \sup_{\{\rho_{RA}^{\theta}\}_{\theta}}\left[  \widehat{I}_{F}(\theta;\{\mathcal{N}_{A\rightarrow B}^{\theta}(\rho_{RA}^{\theta})\}_{\theta})-\widehat{I}_{F}(\theta;\{\rho_{RA}^{\theta}\}_{\theta})\right].
	\label{eq:amortized-RLD-Fisher}
\end{equation}
For the case of multiparameter estimation, we define the amortized RLD Fisher information value as follows:
\begin{equation}
	\widehat{I}_{F}^{\mathcal{A}}(\bm{\theta}, W; \{\mathcal{N}_{A\rightarrow B}^{\bm{\theta}}\}_{\bm{\theta}})\coloneqq  \\ \sup_{\{\rho_{RA}^{\bm{\theta}}\}_{\bm{\theta}}}  \widehat{I}_{F}(\bm{\theta}, W; \{\mathcal{N}_{A\rightarrow B}^{\bm{\theta}}(\rho_{RA}^{\bm{\theta}})\}_{\bm{\theta}})-\widehat{I}_{F}(\bm{\theta}, W; \{\rho_{RA}^{\bm{\theta}}\}_{\bm{\theta}}).
\end{equation}

The framework of amortized Fisher information can be applied more generally beyond RLD Fisher information, as discussed in \cite{KW20a}, and it is helpful for assessing the power of sequential strategies when estimating a parameter encoded in a quantum channel. In the sequential strategies described in Figure~\ref{fig:sequential strategies}, one can qualitatively say that the goal is to ``accumulate'' as much information about $\bm{\theta}$ into the state being carried forward from one channel use to another. The amortized Fisher information captures the marginal increase in Fisher information per channel use in such a scenario. 

A basic inequality obeyed by the amortized RLD Fisher information, as a direct consequence of definitions in \eqref{eq:RLD-qfi-channels-first-def} and \eqref{eq:amortized-RLD-Fisher}, is that 
\begin{equation} \label{eq:amortization-basic-inequality}
	\widehat{I}_{F}^{\mathcal{A}}(\theta;\{\mathcal{N}_{A\rightarrow B}^{\theta}\}_{\theta})\geq\widehat{I}_{F}(\theta;\{\mathcal{N}_{A\rightarrow B}^{\theta}\}_{\theta}),
\end{equation}
which also holds for the multiparameter case as follows:
\begin{equation}
\widehat{I}_{F}^{\mathcal{A}}(\bm{\theta}, W; \{\mathcal{N}_{A\rightarrow B}^{\bm{\theta}}\}_{\bm{\theta}}) \geq \widehat{I}_{F} ( \bm{\theta}, W; \{ \mathcal{N}_{A \rightarrow B}^{\bm{\theta}}\}_{\bm{\theta}} ) .
\end{equation}
This can be understood by considering the right-hand side to arise from restricting to input states with no parameter dependence. Qualitatively, it means that the marginal increase in Fisher information value can only be improved by using a catalyst state family as input. That is,
\begin{align}
\widehat{I}_F^{\mathcal{A}} (\bm{\theta}, W; \{ \mathcal{N}^{\bm{\theta}, W}_{A \rightarrow B} \}_{\bm{\theta}}) &\coloneqq \sup_{\{\rho_{RA}^{\bm{\theta}}\}_{\bm{\theta}}}\left[  \widehat{I}_{F}(\bm{\theta}, W;\{\mathcal{N}_{A\rightarrow B}^{\bm{\theta}}(\rho_{RA}^{\bm{\theta}})\}_{\bm{\theta}})-\widehat{I}_{F}(\bm{\theta}, W;\{\rho_{RA}^{\bm{\theta}}\}_{\bm{\theta}})\right] \\
&\geq \sup_{\{\rho_{RA}\}_{\bm{\theta}}}\left[  \widehat{I}_{F}(\bm{\theta}, W;\{\mathcal{N}_{A\rightarrow B}^{\bm{\theta}}(\rho_{RA})\}_{\bm{\theta}})-\widehat{I}_{F}(\bm{\theta}, W;\{\rho_{RA}\})\right] \\
&=  \sup_{\{\rho_{RA}\}_{\bm{\theta}}}\left[  \widehat{I}_{F}(\bm{\theta}, W;\{\mathcal{N}_{A\rightarrow B}^{\bm{\theta}}(\rho_{RA})\}_{\bm{\theta}})\right] \\
&= \widehat{I}_{F}(\bm{\theta}, W;\{\mathcal{N}_{A\rightarrow B}^{\bm{\theta}}\}_{\bm{\theta}}).
\end{align}

For some amortized channel quantities, the reverse inequality also holds. In fact, we prove in our paper that it holds for the RLD Fisher information value of quantum channels, i.e.,
\begin{equation}
\widehat{I}_{F}^{\mathcal{A}}(\bm{\theta}, W; \{\mathcal{N}_{A\rightarrow B}^{\bm{\theta}}\}_{\bm{\theta}}) \leq \widehat{I}_{F} ( \bm{\theta}, W; \{ \mathcal{N}_{A \rightarrow B}^{\bm{\theta}}\}_{\bm{\theta}} ) . \label{eq:amortization-reverse-inequality}
\end{equation}
This arises as a direct consequence of the chain rule for the RLD Fisher information value, which we state and prove in Proposition \ref{prop:rld-chain-rule}.

A consequence of the reverse inequality \eqref{eq:amortization-reverse-inequality} is a so-called amortization collapse, wherein,
\begin{align}
	\widehat{I}_{F}^{\mathcal{A}}(\theta;\{\mathcal{N}_{A\rightarrow B}^{\theta}\}_{\theta}) & = \widehat{I}_{F}(\theta;\{\mathcal{N}_{A\rightarrow B}^{\theta}\}_{\theta}), \text{~and}
	\label{eq:amort-collapse-1}\\
	\widehat{I}_{F}^{\mathcal{A}}(\bm{\theta}, W; \{\mathcal{N}_{A\rightarrow B}^{\bm{\theta}}\}_{\bm{\theta}}) & = \widehat{I}_{F} ( \bm{\theta}, W; \{ \mathcal{N}_{A \rightarrow B}^{\bm{\theta}}\}_{\bm{\theta}} ) .
	\label{eq:amort-collapse-2}
\end{align}
The meaning of an amortization collapse is that the RLD Fisher information cannot be increased by using a catalyst. We first made the simple observation using \eqref{eq:amortization-basic-inequality} that a catalyst cannot decrease the RLD Fisher information. The reverse inequality \eqref{eq:amortization-reverse-inequality} then implies the opposite, which is that one can do no better with a catalyst state than with one. That is, a catalyst state family $\{ \rho_{RA}^{\bm{\theta}} \}_{\bm{\theta}}$ at the channel input does not increase the RLD Fisher information $\widehat{I}_{F} ( \bm{\theta}, W; \{ \mathcal{N}_{A \rightarrow B}^{\bm{\theta}}\}_{\bm{\theta}})$ by any more than its own RLD Fisher information $\widehat{I}_{F}(\bm{\theta}, W; \{\rho_{RA}^{\bm{\theta}}\}_{\bm{\theta}})$. Continuing the earlier discussion of amortized Fisher information capturing marginal increment in resource, the  amortization collapse in \eqref{eq:amort-collapse-1}--\eqref{eq:amort-collapse-2} means that the RLD Fisher information gained during each of the $n$ sequential uses of the channel $\mathcal{N}^{\bm{\theta}}_{A \rightarrow B}$ is the same. It also means that the maximum attainable RLD Fisher information value for a channel family is the same when using sequential or parallel strategies.

\subsection{Chain Rule for RLD Fisher Information Value} \label{subsec:chain-rule}

Let $\{\mathcal{N}_{A\rightarrow B}^{\theta}\}_{\theta}$ be a differentiable family of quantum channels, and let $\{\rho_{RA}^{\theta}\}_{\theta}$ be a differentiable family of quantum states on systems $RA$, where the system $R$ can be of arbitrary size. Then the following chain rule holds \cite{KW20a}: 
\begin{equation}
	\widehat{I}_{F}(\theta;\{\mathcal{N}_{A\rightarrow B}^{\theta}(\rho_{RA}^{\theta})\}_{\theta})\leq\widehat{I}_{F}(\theta;\{\mathcal{N}_{A\rightarrow B}^{\theta}\}_{\theta}) +\widehat{I}_{F}(\theta;\{\rho_{RA}^{\theta}\}_{\theta}), \label{eq:single-RLD-chain-rule}
\end{equation}
analogous to one that holds for a different distinguishability measure \cite{Fang2019a}.
The chain rule in \eqref{eq:single-RLD-chain-rule} can be generalized to the case of multiparameter estimation. To do so, we first state it in the form of an operator inequality. Then we will show that \eqref{eq:single-RLD-chain-rule} arises as a simple consequence of it.

\begin{proposition} \label{prop:chain-rule-operator-ineq} 
	Let $\{\mathcal{N}_{A\rightarrow B}^{\bm{\theta}}\}_{\bm{\theta}}$ be a
	differentiable family of quantum channels, and let $\{\rho_{RA}^{\bm{\theta}}%
	\}_{\bm{\theta}}$ be a differentiable family of quantum states. Then the
	following chain-rule operator inequality holds%
	\begin{equation} \label{app-eq:operator-chain-rule}
	\widehat{I}_F(\bm{\theta};\{\mathcal{N}_{A\rightarrow B}^{\bm{\theta}}(\rho
	_{RA}^{\bm{\theta}})\}_{\bm{\theta}})\leq\sum_{j,k=1}^{D}|j\rangle\!\langle
	k|\operatorname{Tr}[(\rho_{S}^{\bm{\theta}})^{T}\operatorname{Tr}%
	_{B}[(\partial_{\theta_{j}}\Gamma_{SB}^{\mathcal{N}^{\bm{\theta}}}%
	)(\Gamma_{SB}^{\mathcal{N}^{\bm{\theta}}})^{-1}(\partial_{\theta_{k}}%
	\Gamma_{SB}^{\mathcal{N}^{\bm{\theta}}})]
	+\widehat{I}_F(\bm{\theta};\{\rho_{RA}^{\bm{\theta}}\}_{\bm{\theta}}),
	\end{equation}
	where $\rho_S^{\bm{\theta}}$ is equal to the reduced state of $\rho_{RA}^{\bm{\theta}}$ on system $A$ and system $S$ is isomorphic to system $A$.
\end{proposition}

\begin{proof}
In the proof, we make use of the following identity:
\begin{equation}
\mathcal{N}_{A\rightarrow B}^{\bm{\theta}}(\rho_{RA}^{\bm{\theta}}) = \langle\Gamma|_{AS}\rho_{RA}^{\bm{\theta}}\otimes\Gamma_{SB}^{\mathcal{N}%
		^{\bm{\theta}}}|\Gamma\rangle_{AS},
\end{equation}
where
\begin{equation}
|\Gamma\rangle_{AS} \coloneqq \sum_{i} \ket{i}_A \ket{i}_S.
\end{equation}

Consider that%
	\begin{equation}
	\widehat{I}_F(\bm{\theta};\{\mathcal{N}_{A\rightarrow B}^{\bm{\theta}}(\rho
	_{RA}^{\bm{\theta}})\}_{\bm{\theta}})=\operatorname{Tr}_{2}\!\left[
	\sum_{j,k=1}^{D}|j\rangle\!\langle k|\otimes(\partial_{\theta_{j}}%
	\mathcal{N}_{A\rightarrow B}^{\bm{\theta}}(\rho_{RA}^{\bm{\theta}}%
	))(\mathcal{N}_{A\rightarrow B}^{\bm{\theta}}(\rho_{RA}^{\bm{\theta}}%
	))^{-1}(\partial_{\theta_{k}}\mathcal{N}_{A\rightarrow B}^{\bm{\theta}}%
	(\rho_{RA}^{\bm{\theta}}))\right]  .
	\end{equation}
	
We can, with a series of manipulations, show that
\begin{align}
	&  \sum_{j,k=1}^{D}|j\rangle\!\langle k|\otimes(\partial_{\theta_{j}}%
	\mathcal{N}_{A\rightarrow B}^{\bm{\theta}}(\rho_{RA}^{\bm{\theta}}%
	))(\mathcal{N}_{A\rightarrow B}^{\bm{\theta}}(\rho_{RA}^{\bm{\theta}}%
	))^{-1}(\partial_{\theta_{k}}\mathcal{N}_{A\rightarrow B}^{\bm{\theta}}%
	(\rho_{RA}^{\bm{\theta}}))\nonumber\\
	&  =\sum_{j,k=1}^{D}|j\rangle\!\langle k|\otimes(\partial_{\theta_{j}}%
	\langle\Gamma|_{AS}\rho_{RA}^{\bm{\theta}}\otimes\Gamma_{SB}^{\mathcal{N}%
		^{\bm{\theta}}}|\Gamma\rangle_{AS})(\langle\Gamma|_{AS}\rho_{RA}%
	^{\bm{\theta}}\otimes\Gamma_{SB}^{\mathcal{N}^{\bm{\theta}}}|\Gamma
	\rangle_{AS})^{-1}(\partial_{\theta_{k}}\langle\Gamma|_{AS}\rho_{RA}%
	^{\bm{\theta}}\otimes\Gamma_{SB}^{\mathcal{N}^{\bm{\theta}}}|\Gamma
	\rangle_{AS})\\
	&  =\langle\Gamma|_{AS}\left(  \sum_{j=1}^{D}|j\rangle\!\langle0|\otimes
	\partial_{\theta_{j}}(\rho_{RA}^{\bm{\theta}}\otimes\Gamma_{SB}^{\mathcal{N}%
		^{\bm{\theta}}})\right)  |\Gamma\rangle_{AS}(\langle\Gamma|_{AS}%
	(|0\rangle\!\langle0|\otimes\rho_{RA}^{\bm{\theta}}\otimes\Gamma_{SB}%
	^{\mathcal{N}^{\bm{\theta}}})|\Gamma\rangle_{AS})^{-1}\nonumber\\
	&  \qquad\times\langle\Gamma|_{AS}\left(  \sum_{k=1}^{D}|0\rangle\!\langle
	k|\otimes\partial_{\theta_{k}}(\rho_{RA}^{\bm{\theta}}\otimes\Gamma
	_{SB}^{\mathcal{N}^{\bm{\theta}}})\right)  |\Gamma\rangle_{AS}.
	\end{align}
	
	Then identifying
	\begin{align}
	X &  =\sum_{k=1}^{D}|0\rangle\!\langle k|\otimes\partial_{\theta_{k}}(\rho
	_{RA}^{\bm{\theta}}\otimes\Gamma_{SB}^{\mathcal{N}^{\bm{\theta}}}),\\
	Y &  =|0\rangle\!\langle0|\otimes\rho_{RA}^{\bm{\theta}}\otimes\Gamma
	_{SB}^{\mathcal{N}^{\bm{\theta}}}\\
	L &  =I_{D}\otimes\langle\Gamma|_{AS}\otimes I_{RB},
	\end{align}
	we can apply the well known transformer inequality $LX^{\dag}L^{\dag}(LYL^{\dag})^{-1}LXL^{\dag}\leq LX^{\dag}Y^{-1}XL^{\dag}$ (see, e.g., \cite[Lemma~54]{KW20a} as well as \cite{Fang2019a}) to find that the last line above is
	not larger than the following one in the operator-inequality sense:%
	\begin{align}
	&  \langle\Gamma|_{AS}\left(  \sum_{j=1}^{D}|j\rangle\!\langle0|\otimes
	\partial_{\theta_{j}}(\rho_{RA}^{\bm{\theta}}\otimes\Gamma_{SB}^{\mathcal{N}%
		^{\bm{\theta}}})\right)  (|0\rangle\!\langle0|\otimes\rho_{RA}^{\bm{\theta}}%
	\otimes\Gamma_{SB}^{\mathcal{N}^{\bm{\theta}}})^{-1}\left(  \sum_{k=1}%
	^{D}|0\rangle\!\langle k|\otimes\partial_{\theta_{k}}(\rho_{RA}^{\bm{\theta}}%
	\otimes\Gamma_{SB}^{\mathcal{N}^{\bm{\theta}}})\right)  |\Gamma\rangle
	_{AS}\nonumber\\
	&  =\langle\Gamma|_{AS}\sum_{j,k=1}^{D}|j\rangle\!\langle k|\otimes
	\partial_{\theta_{j}}(\rho_{RA}^{\bm{\theta}}\otimes\Gamma_{SB}^{\mathcal{N}%
		^{\bm{\theta}}})(\rho_{RA}^{\bm{\theta}}\otimes\Gamma_{SB}^{\mathcal{N}%
		^{\bm{\theta}}})^{-1}\partial_{\theta_{k}}(\rho_{RA}^{\bm{\theta}}%
	\otimes\Gamma_{SB}^{\mathcal{N}^{\bm{\theta}}})|\Gamma\rangle_{AS}\\
	&  =\sum_{j,k=1}^{D}|j\rangle\!\langle k|\otimes\langle\Gamma|_{AS}%
	\partial_{\theta_{j}}(\rho_{RA}^{\bm{\theta}}\otimes\Gamma_{SB}^{\mathcal{N}%
		^{\bm{\theta}}})(\rho_{RA}^{\bm{\theta}}\otimes\Gamma_{SB}^{\mathcal{N}%
		^{\bm{\theta}}})^{-1}\partial_{\theta_{k}}(\rho_{RA}^{\bm{\theta}}%
	\otimes\Gamma_{SB}^{\mathcal{N}^{\bm{\theta}}})|\Gamma\rangle_{AS}.
	\end{align}
	
	Consider that%
	\begin{equation}
	\partial_{\theta_{j}}(\rho_{RA}^{\bm{\theta}}\otimes\Gamma_{SB}^{\mathcal{N}%
		^{\bm{\theta}}})=(\partial_{\theta_{j}}\rho_{RA}^{\bm{\theta}})\otimes
	\Gamma_{SB}^{\mathcal{N}^{\bm{\theta}}}+\rho_{RA}^{\bm{\theta}}\otimes
	(\partial_{\theta_{j}}\Gamma_{SB}^{\mathcal{N}^{\bm{\theta}}}).
	\end{equation}
	Then we find that%
	\begin{align}
	&  \partial_{\theta_{j}}(\rho_{RA}^{\bm{\theta}}\otimes\Gamma_{SB}%
	^{\mathcal{N}^{\bm{\theta}}})(\rho_{RA}^{\bm{\theta}}\otimes\Gamma
	_{SB}^{\mathcal{N}^{\bm{\theta}}})^{-1}\nonumber\\
	&  =((\partial_{\theta_{j}}\rho_{RA}^{\bm{\theta}})\otimes\Gamma
	_{SB}^{\mathcal{N}^{\bm{\theta}}}+\rho_{RA}^{\bm{\theta}}\otimes
	(\partial_{\theta_{j}}\Gamma_{SB}^{\mathcal{N}^{\bm{\theta}}}))(\rho
	_{RA}^{\bm{\theta}}\otimes\Gamma_{SB}^{\mathcal{N}^{\bm{\theta }}})^{-1}\\
	&  =((\partial_{\theta_{j}}\rho_{RA}^{\bm{\theta}})\otimes\Gamma
	_{SB}^{\mathcal{N}^{\bm{\theta}}}+\rho_{RA}^{\bm{\theta}}\otimes
	(\partial_{\theta_{j}}\Gamma_{SB}^{\mathcal{N}^{\bm{\theta}}}))((\rho
	_{RA}^{\bm{\theta}})^{-1}\otimes(\Gamma_{SB}^{\mathcal{N}^{\bm{\theta}}}%
	)^{-1})\\
	&  =(\partial_{\theta_{j}}\rho_{RA}^{\bm{\theta}})(\rho_{RA}^{\bm{\theta}}%
	)^{-1}\otimes\Gamma_{SB}^{\mathcal{N}^{\bm{\theta}}}(\Gamma_{SB}%
	^{\mathcal{N}^{\bm{\theta}}})^{-1}+\rho_{RA}^{\bm{\theta }}(\rho
	_{RA}^{\bm{\theta}})^{-1}\otimes(\partial_{\theta_{j}}\Gamma_{SB}%
	^{\mathcal{N}^{\bm{\theta}}})(\Gamma_{SB}^{\mathcal{N}^{\bm{\theta}}})^{-1}\\
	&  =(\partial_{\theta_{j}}\rho_{RA}^{\bm{\theta}})(\rho_{RA}^{\bm{\theta}}%
	)^{-1}\otimes\Pi_{\Gamma^{\mathcal{N}^{\bm{\theta}}}}+\Pi_{\rho^{\bm{\theta}}%
	}\otimes(\partial_{\theta_{j}}\Gamma_{SB}^{\mathcal{N}^{\bm{\theta}}}%
	)(\Gamma_{SB}^{\mathcal{N}^{\bm{\theta}}})^{-1}.
	\end{align}
	
	Right multiplying this last line by $\partial_{\theta_{k}}(\rho_{RA}%
	^{\bm{\theta}}\otimes\Gamma_{SB}^{\mathcal{N}^{\bm{\theta}}})$ gives%
	\begin{multline}
	  ((\partial_{\theta_{j}}\rho_{RA}^{\bm{\theta}})(\rho_{RA}^{\bm{\theta}}%
	)^{-1}\otimes\Pi_{\Gamma^{\mathcal{N}^{\bm{\theta}}}}+\Pi_{\rho^{\bm{\theta}}%
	}\otimes(\partial_{\theta_{j}}\Gamma_{SB}^{\mathcal{N}^{\bm{\theta}}}%
	)(\Gamma_{SB}^{\mathcal{N}^{\bm{\theta}}})^{-1})\partial_{\theta_{k}}%
	(\rho_{RA}^{\bm{\theta}}\otimes\Gamma_{SB}^{\mathcal{N}^{\bm{\theta}}%
	})\\
	  =(\partial_{\theta_{j}}\rho_{RA}^{\bm{\theta}})(\rho_{RA}^{\bm{\theta}}%
	)^{-1}(\partial_{\theta_{k}}\rho_{RA}^{\bm{\theta}})\otimes\Gamma
	_{SB}^{\mathcal{N}^{\bm{\theta}}}+(\partial_{\theta_{j}}\rho_{RA}%
	^{\bm{\theta}})\Pi_{\rho^{\bm{\theta}}}\otimes\Pi_{\Gamma^{\mathcal{N}%
			^{\bm{\theta}}}}(\partial_{\theta_{k}}\Gamma_{SB}^{\mathcal{N}^{\bm{\theta}}%
	})\\
	+\Pi_{\rho^{\bm{\theta}}}(\partial_{\theta_{k}}\rho_{RA}%
	^{\bm{\theta }})\otimes(\partial_{\theta_{j}}\Gamma_{SB}^{\mathcal{N}%
		^{\bm{\theta}}})\Pi_{\Gamma^{\mathcal{N}^{\bm{\theta}}}}+\rho_{RA}%
	^{\bm{\theta}}\otimes(\partial_{\theta_{j}}\Gamma_{SB}^{\mathcal{N}%
		^{\bm{\theta}}})(\Gamma_{SB}^{\mathcal{N}^{\bm{\theta}}})^{-1}(\partial
	_{\theta_{k}}\Gamma_{SB}^{\mathcal{N}^{\bm{\theta}}}).	
	\end{multline}
	
	Adding in the terms $(\partial_{\theta_{j}}\rho_{RA}^{\bm{\theta}})\Pi
	_{\rho^{\bm{\theta}}}^{\perp}=0$, $(\partial_{\theta_{j}}\Gamma_{SB}%
	^{\mathcal{N}^{\bm{\theta}}})\Pi_{\Gamma^{\mathcal{N}^{\bm{\theta
	}}}}^{\perp}=0$, $\Pi_{\rho^{\bm{\theta}}}^{\perp}(\partial_{\theta_{k}}%
	\rho_{RA}^{\bm{\theta}})=0$, and $\Pi_{\Gamma^{\mathcal{N}^{\bm{\theta}}}%
	}^{\perp}(\partial_{\theta_{k}}\Gamma_{SB}^{\mathcal{N}^{\bm{\theta}}})=0$,
	which follow from the support conditions for finite RLD Fisher information, the last line above becomes as follows:%
	\begin{multline}
	=(\partial_{\theta_{j}}\rho_{RA}^{\bm{\theta}})(\rho_{RA}^{\bm{\theta }}%
	)^{-1}(\partial_{\theta_{k}}\rho_{RA}^{\bm{\theta}})\otimes\Gamma
	_{SB}^{\mathcal{N}^{\bm{\theta}}}+(\partial_{\theta_{j}}\rho_{RA}%
	^{\bm{\theta}})\otimes(\partial_{\theta_{k}}\Gamma_{SB}^{\mathcal{N}%
		^{\bm{\theta}}})\\
	+(\partial_{\theta_{k}}\rho_{RA}^{\bm{\theta}})\otimes(\partial_{\theta_{j}%
	}\Gamma_{SB}^{\mathcal{N}^{\bm{\theta}}})+\rho_{RA}^{\bm{\theta}}%
	\otimes(\partial_{\theta_{j}}\Gamma_{SB}^{\mathcal{N}^{\bm{\theta}}}%
	)(\Gamma_{SB}^{\mathcal{N}^{\bm{\theta}}})^{-1}(\partial_{\theta_{k}}%
	\Gamma_{SB}^{\mathcal{N}^{\bm{\theta}}}).
	\end{multline}
	
	So then the relevant matrix is simplified as follows:%
	\begin{multline}
	\sum_{j,k=1}^{D}|j\rangle\!\langle k|\otimes\langle\Gamma|_{AS}[(\partial
	_{\theta_{j}}\rho_{RA}^{\bm{\theta}})(\rho_{RA}^{\bm{\theta}})^{-1}%
	(\partial_{\theta_{k}}\rho_{RA}^{\bm{\theta}})\otimes\Gamma_{SB}%
	^{\mathcal{N}^{\bm{\theta}}}+(\partial_{\theta_{j}}\rho_{RA}^{\bm{\theta}}%
	)\otimes(\partial_{\theta_{k}}\Gamma_{SB}^{\mathcal{N}^{\bm{\theta}}})\\
	+(\partial_{\theta_{k}}\rho_{RA}^{\bm{\theta}})\otimes(\partial_{\theta_{j}%
	}\Gamma_{SB}^{\mathcal{N}^{\bm{\theta}}})+\rho_{RA}^{\bm{\theta}}%
	\otimes(\partial_{\theta_{j}}\Gamma_{SB}^{\mathcal{N}^{\bm{\theta}}}%
	)(\Gamma_{SB}^{\mathcal{N}^{\bm{\theta}}})^{-1}(\partial_{\theta_{k}}%
	\Gamma_{SB}^{\mathcal{N}^{\bm{\theta}}})]|\Gamma\rangle_{AS}.
	\end{multline}
	Thus, we have established the following operator inequality:
	\begin{multline}
	\sum_{j,k=1}^{D}|j\rangle\!\langle k|\otimes(\partial_{\theta_{j}}%
	\mathcal{N}_{A\rightarrow B}^{\bm{\theta}}(\rho_{RA}^{\bm{\theta}}%
	))(\mathcal{N}_{A\rightarrow B}^{\bm{\theta}}(\rho_{RA}^{\bm{\theta}}%
	))^{-1}(\partial_{\theta_{k}}\mathcal{N}_{A\rightarrow B}^{\bm{\theta}}%
	(\rho_{RA}^{\bm{\theta}}))\\
	\leq\sum_{j,k=1}^{D}|j\rangle\!\langle k|\otimes\langle\Gamma|_{AS}%
	[(\partial_{\theta_{j}}\rho_{RA}^{\bm{\theta}})(\rho_{RA}^{\bm{\theta}}%
	)^{-1}(\partial_{\theta_{k}}\rho_{RA}^{\bm{\theta}})\otimes\Gamma
	_{SB}^{\mathcal{N}^{\bm{\theta}}}+(\partial_{\theta_{j}}\rho_{RA}%
	^{\bm{\theta}})\otimes(\partial_{\theta_{k}}\Gamma_{SB}^{\mathcal{N}%
		^{\bm{\theta}}})\\
	+(\partial_{\theta_{k}}\rho_{RA}^{\bm{\theta}})\otimes(\partial_{\theta_{j}%
	}\Gamma_{SB}^{\mathcal{N}^{\bm{\theta}}})+\rho_{RA}^{\bm{\theta}}%
	\otimes(\partial_{\theta_{j}}\Gamma_{SB}^{\mathcal{N}^{\bm{\theta}}}%
	)(\Gamma_{SB}^{\mathcal{N}^{\bm{\theta}}})^{-1}(\partial_{\theta_{k}}%
	\Gamma_{SB}^{\mathcal{N}^{\bm{\theta}}})]|\Gamma\rangle_{AS}.
	\end{multline}
	We can then take a partial trace over the $RB\ $systems and arrive at the
	following operator inequality:%
	\begin{multline}
	\widehat{I}_F(\bm{\theta};\{\mathcal{N}_{A\rightarrow B}^{\bm{\theta}}(\rho
	_{RA}^{\bm{\theta}})\}_{\bm{\theta}})\\
	\leq\sum_{j,k=1}^{D}|j\rangle\!\langle k|~\operatorname{Tr}_{RB}[\langle
	\Gamma|_{AS}[(\partial_{\theta_{j}}\rho_{RA}^{\bm{\theta}})(\rho
	_{RA}^{\bm{\theta}})^{-1}(\partial_{\theta_{k}}\rho_{RA}^{\bm{\theta}}%
	)\otimes\Gamma_{SB}^{\mathcal{N}^{\bm{\theta}}}+(\partial_{\theta_{j}}%
	\rho_{RA}^{\bm{\theta}})\otimes(\partial_{\theta_{k}}\Gamma_{SB}%
	^{\mathcal{N}^{\bm{\theta}}})\\
	+(\partial_{\theta_{k}}\rho_{RA}^{\bm{\theta}})\otimes(\partial_{\theta_{j}%
	}\Gamma_{SB}^{\mathcal{N}^{\bm{\theta}}})+\rho_{RA}^{\bm{\theta}}%
	\otimes(\partial_{\theta_{j}}\Gamma_{SB}^{\mathcal{N}^{\bm{\theta}}}%
	)(\Gamma_{SB}^{\mathcal{N}^{\bm{\theta}}})^{-1}(\partial_{\theta_{k}}%
	\Gamma_{SB}^{\mathcal{N}^{\bm{\theta}}})]|\Gamma\rangle_{AS}].
	\end{multline}
	Now we evaluate each term on the right:%
	\begin{align}
&  \!\!\!\!\!\langle\Gamma|_{AS}\operatorname{Tr}_{RB}[(\partial_{\theta_{j}%
}\rho_{RA}^{\bm{\theta}})(\rho_{RA}^{\bm{\theta}})^{-1}(\partial_{\theta_{k}%
}\rho_{RA}^{\bm{\theta}})\otimes\Gamma_{SB}^{\mathcal{N}^{\bm{\theta}}%
}]|\Gamma\rangle_{AS}\nonumber\\
&  =\langle\Gamma|_{AS}\operatorname{Tr}_{R}[(\partial_{\theta_{j}}\rho
_{RA}^{\bm{\theta}})(\rho_{RA}^{\bm{\theta}})^{-1}(\partial_{\theta_{k}}%
\rho_{RA}^{\bm{\theta}})]\otimes\operatorname{Tr}_{B}[\Gamma_{SB}%
^{\mathcal{N}^{\bm{\theta}}}]|\Gamma\rangle_{AS}\\
&  =\langle\Gamma|_{AS}\operatorname{Tr}_{R}[(\partial_{\theta_{j}}\rho
_{RA}^{\bm{\theta}})(\rho_{RA}^{\bm{\theta}})^{-1}(\partial_{\theta_{k}}%
\rho_{RA}^{\bm{\theta}})]\otimes I_{S}|\Gamma\rangle_{AS}\\
&  =\operatorname{Tr}[(\partial_{\theta_{j}}\rho_{RA}^{\bm{\theta}})(\rho
_{RA}^{\bm{\theta}})^{-1}(\partial_{\theta_{k}}\rho_{RA}^{\bm{\theta}})],\\
&  \!\!\!\!\!\langle\Gamma|_{AS}\operatorname{Tr}_{RB}[(\partial_{\theta_{k}%
}\rho_{RA}^{\bm{\theta}})\otimes(\partial_{\theta_{j}}\Gamma_{SB}%
^{\mathcal{N}^{\bm{\theta}}})]|\Gamma\rangle_{AS}\nonumber\\
&  =\langle\Gamma|_{AS}\operatorname{Tr}_{R}[\partial_{\theta_{k}}\rho
_{RA}^{\bm{\theta}}]\otimes\operatorname{Tr}_{B}[\partial_{\theta_{j}}%
\Gamma_{SB}^{\mathcal{N}^{\bm{\theta}}}]|\Gamma\rangle_{AS}\\
&  =\langle\Gamma|_{AS}\operatorname{Tr}_{R}[\partial_{\theta_{k}}\rho
_{RA}^{\bm{\theta}}]\otimes\partial_{\theta_{j}}\operatorname{Tr}_{B}%
[\Gamma_{SB}^{\mathcal{N}^{\bm{\theta}}}]|\Gamma\rangle_{AS}\\
&  =\langle\Gamma|_{AS}\operatorname{Tr}_{R}[\partial_{\theta_{k}}\rho
_{RA}^{\bm{\theta}}]\otimes(\partial_{\theta_{j}}I_{S})|\Gamma\rangle_{AS}\\
&  =0, \\
&  \!\!\!\!\!\langle\Gamma|_{AS}\operatorname{Tr}_{RB}[(\partial_{\theta_{j}%
}\rho_{RA}^{\bm{\theta}})\otimes(\partial_{\theta_{k}}\Gamma_{SB}%
^{\mathcal{N}^{\bm{\theta}}})]|\Gamma\rangle_{AS}=0, \text{and} \\
&  \!\!\!\!\!\langle\Gamma|_{AS}\operatorname{Tr}_{RB}[\rho_{RA}%
^{\bm{\theta}}\otimes(\partial_{\theta_{j}}\Gamma_{SB}^{\mathcal{N}%
^{\bm{\theta}}})(\Gamma_{SB}^{\mathcal{N}^{\bm{\theta}}})^{-1}(\partial
_{\theta_{k}}\Gamma_{SB}^{\mathcal{N}^{\bm{\theta}}})]|\Gamma\rangle
_{AS}\nonumber\\
&  =\langle\Gamma|_{AS}\rho_{A}^{\bm{\theta}}\otimes\operatorname{Tr}%
_{B}[(\partial_{\theta_{j}}\Gamma_{SB}^{\mathcal{N}^{\bm{\theta}}}%
)(\Gamma_{SB}^{\mathcal{N}^{\bm{\theta}}})^{-1}(\partial_{\theta_{k}}%
\Gamma_{SB}^{\mathcal{N}^{\bm{\theta}}})]|\Gamma\rangle_{AS}\\
&  =\operatorname{Tr}[(\rho_{S}^{\bm{\theta}})^{T}\operatorname{Tr}%
_{B}[(\partial_{\theta_{j}}\Gamma_{SB}^{\mathcal{N}^{\bm{\theta}}}%
)(\Gamma_{SB}^{\mathcal{N}^{\bm{\theta}}})^{-1}(\partial_{\theta_{k}}%
\Gamma_{SB}^{\mathcal{N}^{\bm{\theta}}})].
\end{align}
	
	Substituting back above, we find that
	\begin{align}
	& \widehat{I}_F(\bm{\theta};\{\mathcal{N}_{A\rightarrow B}^{\bm{\theta}}%
	(\rho_{RA}^{\bm{\theta}})\}_{\bm{\theta}})\nonumber\\
	& \leq\sum_{j,k=1}^{D}|j\rangle\!\langle k|\ \operatorname{Tr}[(\partial
	_{\theta_{j}}\rho_{RA}^{\bm{\theta}})(\rho_{RA}^{\bm{\theta}})^{-1}%
	(\partial_{\theta_{k}}\rho_{RA}^{\bm{\theta}})]\nonumber\\
	& \qquad+\sum_{j,k=1}^{D}|j\rangle\!\langle k|\ \operatorname{Tr}[(\rho
	_{S}^{\bm{\theta}})^{T}\operatorname{Tr}_{B}[(\partial_{\theta_{j}}\Gamma
	_{SB}^{\mathcal{N}^{\bm{\theta}}})(\Gamma_{SB}^{\mathcal{N}^{\bm{\theta}}%
	})^{-1}(\partial_{\theta_{k}}\Gamma_{SB}^{\mathcal{N}^{\bm{\theta}}})]\\
	& =\widehat{I}_F(\bm{\theta};\{\rho_{RA}^{\bm{\theta}}\}_{\bm{\theta}}%
	)+\sum_{j,k=1}^{D}|j\rangle\!\langle k|\ \operatorname{Tr}[(\rho_{S}%
	^{\bm{\theta}})^{T}\operatorname{Tr}_{B}[(\partial_{\theta_{j}}\Gamma
	_{SB}^{\mathcal{N}^{\bm{\theta}}})(\Gamma_{SB}^{\mathcal{N}^{\bm{\theta}}%
	})^{-1}(\partial_{\theta_{k}}\Gamma_{SB}^{\mathcal{N}^{\bm{\theta}}})].
	\end{align}
	This concludes the proof.
\end{proof}
\bigskip

Now we show how the operator-inequality chain rule leads us to the desired result in  \eqref{eq:multi-RLD-chain-rule}. 

\begin{proposition} \label{prop:rld-chain-rule}
With $W$ a weight matrix as defined previously, let $\{\mathcal{N}_{A\rightarrow B}^{\bm{\theta}}\}_{\bm{\theta}}$ be a differentiable family of quantum channels, and let $\{\rho_{RA}^{\bm{\theta}}\}_{\bm{\theta}}$ be a differentiable family of quantum states. Then the following chain-rule inequality holds%
\begin{equation} \label{eq:multi-RLD-chain-rule}
	\widehat{I}_F(\bm{\theta},W;\{\mathcal{N}_{A\rightarrow B}^{\bm{\theta}}(\rho_{RA}^{\bm{\theta}})\}_{\bm{\theta}})\leq  \widehat{I}_F(\bm{\theta},W;\{\mathcal{N}_{A\rightarrow B}^{\bm{\theta}}\}_{\bm{\theta}}) +\widehat{I}_F(\bm{\theta},W;\{\rho_{RA}^{\bm{\theta}}\}_{\bm{\theta}}).
\end{equation}
\end{proposition}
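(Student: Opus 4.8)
The plan is to deduce the weighted chain rule by pairing the operator-inequality chain rule \eqref{app-eq:operator-chain-rule} of the previous proposition with the weight matrix $W$. Write $\Xi_{\bm{\theta}}$ for the right-hand side of \eqref{app-eq:operator-chain-rule}, which is a $D\times D$ matrix just like $\widehat{I}_F(\bm{\theta};\{\mathcal{N}_{A\rightarrow B}^{\bm{\theta}}(\rho_{RA}^{\bm{\theta}})\}_{\bm{\theta}})$. Since $W\geq 0$, the operator inequality $\widehat{I}_F(\bm{\theta};\{\mathcal{N}_{A\rightarrow B}^{\bm{\theta}}(\rho_{RA}^{\bm{\theta}})\}_{\bm{\theta}})\leq\Xi_{\bm{\theta}}$ is preserved under the map $A\mapsto\operatorname{Tr}[WA]$ (because $\operatorname{Tr}[W(\Xi_{\bm{\theta}}-\widehat{I}_F(\cdots))]\geq 0$ for a product of positive semi-definite operators). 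By the definition of the RLD Fisher information value, $\operatorname{Tr}[W\,\widehat{I}_F(\bm{\theta};\{\mathcal{N}_{A\rightarrow B}^{\bm{\theta}}(\rho_{RA}^{\bm{\theta}})\}_{\bm{\theta}})]$ is exactly the left-hand side of the claimed inequality, so it remains to bound $\operatorname{Tr}[W\Xi_{\bm{\theta}}]$.

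I would split $\operatorname{Tr}[W\Xi_{\bm{\theta}}]$ into two summands. The contribution of the term $\widehat{I}_F(\bm{\theta};\{\rho_{RA}^{\bm{\theta}}\}_{\bm{\theta}})$ in $\Xi_{\bm{\theta}}$ is, again by definition, precisely $\widehat{I}_F(\bm{\theta},W;\{\rho_{RA}^{\bm{\theta}}\}_{\bm{\theta}})$. For the other summand, using $\operatorname{Tr}[W\sum_{j,k}|j\rangle\!\langle k|\,c_{jk}]=\sum_{j,k}\langle k|W|j\rangle\,c_{jk}$ and pulling the scalars and $\operatorname{Tr}_B$ through the sum, I rewrite it as $\operatorname{Tr}[(\rho_S^{\bm{\theta}})^{T}\,Y]$, where $Y\coloneqq\sum_{j,k=1}^{D}\langle k|W|j\rangle\operatorname{Tr}_B[(\partial_{\theta_j}\Gamma_{SB}^{\mathcal{N}^{\bm{\theta}}})(\Gamma_{SB}^{\mathcal{N}^{\bm{\theta}}})^{-1}(\partial_{\theta_k}\Gamma_{SB}^{\mathcal{N}^{\bm{\theta}}})]$ is an operator on system $S$.

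Next I would show $Y\geq 0$. Setting $X_j\coloneqq(\Gamma_{SB}^{\mathcal{N}^{\bm{\theta}}})^{-1/2}(\partial_{\theta_j}\Gamma_{SB}^{\mathcal{N}^{\bm{\theta}}})$ (well defined on $\operatorname{supp}(\Gamma_{SB}^{\mathcal{N}^{\bm{\theta}}})$ by the channel finiteness condition \eqref{eq:app:channel-RLD-finiteness-cond}), each summand of $Y$ before the partial trace equals $X_j^{\dag}X_k$, so the relevant operator is $\sum_{j,k}\langle k|W|j\rangle X_j^{\dag}X_k$; factoring the appropriate transpose of $W$ as a Gram matrix and regrouping shows this is positive semi-definite, and $\operatorname{Tr}_B$ preserves positivity, so $Y\geq 0$. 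Since $(\rho_S^{\bm{\theta}})^{T}$ is a density operator, the variational characterization $\left\Vert Y\right\Vert_{\infty}=\sup_{\sigma\geq 0,\ \operatorname{Tr}[\sigma]=1}\operatorname{Tr}[Y\sigma]$ (used already in the proof of Proposition~\ref{prop:geo-fish-explicit-formula-1st-order}) gives $\operatorname{Tr}[(\rho_S^{\bm{\theta}})^{T}Y]\leq\left\Vert Y\right\Vert_{\infty}$. Finally, because $S$ is isomorphic to $A$ and $\Gamma_{SB}^{\mathcal{N}^{\bm{\theta}}}$ is the Choi operator of $\mathcal{N}_{A\rightarrow B}^{\bm{\theta}}$, relabeling $S\to R$ and applying Proposition~\ref{prop:geo-fish-explicit-formula-1st-order} identifies $\left\Vert Y\right\Vert_{\infty}=\widehat{I}_F(\bm{\theta},W;\{\mathcal{N}_{A\rightarrow B}^{\bm{\theta}}\}_{\bm{\theta}})$. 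Combining the two summands yields the claim.

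I expect the positivity of $Y$ to be the step requiring the most care, in particular tracking transpose conventions: with the pairing $\langle k|W|j\rangle$ and the order $X_j^{\dag}X_k$ one applies the Gram factorization to $W^{T}$, which is positive semi-definite precisely because $W$ is. A minor additional point is the degenerate cases: if $\widehat{I}_F(\bm{\theta},W;\{\mathcal{N}_{A\rightarrow B}^{\bm{\theta}}\}_{\bm{\theta}})$ or $\widehat{I}_F(\bm{\theta},W;\{\rho_{RA}^{\bm{\theta}}\}_{\bm{\theta}})$ is $+\infty$ the inequality is trivial, while if both are finite then the finiteness conditions \eqref{eq:app:state-RLD-finiteness-cond} and \eqref{eq:app:channel-RLD-finiteness-cond} together ensure the finiteness condition for the output family $\{\mathcal{N}_{A\rightarrow B}^{\bm{\theta}}(\rho_{RA}^{\bm{\theta}})\}_{\bm{\theta}}$, so the operator-inequality proposition applies and all inverses are taken on the relevant supports.
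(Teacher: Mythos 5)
Your proposal is correct and follows essentially the same route as the paper: apply the trace against $W$ to the operator-inequality chain rule, identify the state term directly from the definition, and bound the channel term by $\operatorname{Tr}[(\rho_S^{\bm{\theta}})^{T}Y]\leq\left\Vert Y\right\Vert_{\infty}=\widehat{I}_F(\bm{\theta},W;\{\mathcal{N}_{A\rightarrow B}^{\bm{\theta}}\}_{\bm{\theta}})$. Your explicit Gram-matrix verification that $Y\geq0$ (needed for the variational characterization of the infinity norm) and your handling of the infinite cases are details the paper leaves implicit, but they do not change the argument.
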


\begin{proof}
We start by restating the operator-inequality chain rule in \eqref{app-eq:operator-chain-rule}:
\begin{equation}
\widehat{I}_F(\bm{\theta};\{\mathcal{N}_{A\rightarrow B}^{\bm{\theta}}(\rho
_{RA}^{\bm{\theta}})\}_{\bm{\theta}})\leq\sum_{j,k=1}^{D}|j\rangle\!\langle
k|\operatorname{Tr}[(\rho_{S}^{\bm{\theta}})^{T}\operatorname{Tr}%
_{B}[(\partial_{\theta_{j}}\Gamma_{SB}^{\mathcal{N}^{\bm{\theta}}}%
)(\Gamma_{SB}^{\mathcal{N}^{\bm{\theta}}})^{-1}(\partial_{\theta_{k}}%
\Gamma_{SB}^{\mathcal{N}^{\bm{\theta}}})]
+\widehat{I}_F(\bm{\theta};\{\rho_{RA}^{\bm{\theta}}\}_{\bm{\theta}}).
\end{equation}
	
We sandwich the above with $W^{1/2}$ on both sides and take the trace (a positive map overall) to preserve the inequality and obtain
\begin{multline}
\Tr[ W \widehat{I}_F(\bm{\theta};\{\mathcal{N}_{A\rightarrow B}^{\bm{\theta}}(\rho
_{RA}^{\bm{\theta}})\}_{\bm{\theta}}) ] \leq
\Tr \!\left[ W\sum_{j,k=1}^{D}|j\rangle\!\langle
k|\operatorname{Tr}[(\rho_{S}^{\bm{\theta}})^{T}\operatorname{Tr}%
_{B}[(\partial_{\theta_{j}}\Gamma_{SB}^{\mathcal{N}^{\bm{\theta}}}%
)(\Gamma_{SB}^{\mathcal{N}^{\bm{\theta}}})^{-1}(\partial_{\theta_{k}}%
\Gamma_{SB}^{\mathcal{N}^{\bm{\theta}}})]]\right] \\
+ \Tr[ W\widehat{I}_F(\bm{\theta};\{\rho_{RA}^{\bm{\theta}}\}_{\bm{\theta}})].
\end{multline}	
Using the definition of the RLD Fisher information value, the above simplifies to
\begin{multline} \label{app-eq:chain-rule-intermediate}
\widehat{I}_F(\bm{\theta},W;\{\mathcal{N}_{A\rightarrow B}^{\bm{\theta}}%
(\rho_{RA}^{\bm{\theta}})\}_{\bm{\theta}}) \leq 
\Tr\! \left[ W\sum_{j,k=1}^{D}|j\rangle\!\langle
k|\operatorname{Tr}[(\rho_{S}^{\bm{\theta}})^{T}\operatorname{Tr}%
_{B}[(\partial_{\theta_{j}}\Gamma_{SB}^{\mathcal{N}^{\bm{\theta}}}%
)(\Gamma_{SB}^{\mathcal{N}^{\bm{\theta}}})^{-1}(\partial_{\theta_{k}}%
\Gamma_{SB}^{\mathcal{N}^{\bm{\theta}}})] ]\right] \\
+ \widehat{I}_F(\bm{\theta},W;\{\rho_{RA}^{\bm{\theta}}\}_{\bm{\theta}}).
\end{multline}
Now we consider that
\begin{align}
& \Tr \!\left[ W\sum_{j,k=1}^{D}|j\rangle\!\langle
k|\operatorname{Tr}[(\rho_{S}^{\bm{\theta}})^{T}\operatorname{Tr}
_{B}[(\partial_{\theta_{j}}\Gamma_{SB}^{\mathcal{N}^{\bm{\theta}}}
)(\Gamma_{SB}^{\mathcal{N}^{\bm{\theta}}})^{-1}(\partial_{\theta_{k}}
\Gamma_{SB}^{\mathcal{N}^{\bm{\theta}}})]\right] \notag \\
& \qquad = \sum_{j,k=1}^D \langle k | W | j \rangle \Tr\!\left[ (\rho_{S}^{\bm{\theta}})^{T} \operatorname{Tr}%
_{B}[(\partial_{\theta_{j}}\Gamma_{SB}^{\mathcal{N}^{\bm{\theta}}}%
)(\Gamma_{SB}^{\mathcal{N}^{\bm{\theta}}})^{-1}(\partial_{\theta_{k}}%
\Gamma_{SB}^{\mathcal{N}^{\bm{\theta}}})]\right] \\
& \qquad = \Tr\!\left[ (\rho_{S}^{\bm{\theta}})^{T} \sum_{j,k=1}^D \langle k | W | j \rangle \operatorname{Tr}%
_{B}[(\partial_{\theta_{j}}\Gamma_{SB}^{\mathcal{N}^{\bm{\theta}}}%
)(\Gamma_{SB}^{\mathcal{N}^{\bm{\theta}}})^{-1}(\partial_{\theta_{k}}%
\Gamma_{SB}^{\mathcal{N}^{\bm{\theta}}})]  \right] \\
& \qquad \leq \sup_{\rho^{\bm{\theta}}_S }  \Tr \!\left[ (\rho_{S}^{\bm{\theta}})^{T} \sum_{j,k=1}^D \langle k | W | j \rangle \operatorname{Tr}%
_{B}[(\partial_{\theta_{j}}\Gamma_{SB}^{\mathcal{N}^{\bm{\theta}}}%
)(\Gamma_{SB}^{\mathcal{N}^{\bm{\theta}}})^{-1}(\partial_{\theta_{k}}%
\Gamma_{SB}^{\mathcal{N}^{\bm{\theta}}})]  \right] \\
& \qquad =  \left\Vert \sum_{j,k=1}^D \langle k | W | j \rangle \operatorname{Tr}%
_{B}[(\partial_{\theta_{j}}\Gamma_{SB}^{\mathcal{N}^{\bm{\theta}}}%
)(\Gamma_{SB}^{\mathcal{N}^{\bm{\theta}}})^{-1}(\partial_{\theta_{k}}%
\Gamma_{SB}^{\mathcal{N}^{\bm{\theta}}})]   \right\Vert_\infty \\
& \qquad = \widehat{I}_F
(\bm{\theta},W;\{\mathcal{N}_{A\rightarrow B}^{\bm{\theta}}\}_{\bm{\theta}}. \
\end{align}

Using the above and \eqref{app-eq:chain-rule-intermediate}, we conclude that
\begin{equation}
\widehat{I}_F(\bm{\theta},W;\{\mathcal{N}_{A\rightarrow B}^{\bm{\theta}}%
(\rho_{RA}^{\bm{\theta}})\}_{\bm{\theta}}) \leq \widehat{I}_F
(\bm{\theta},W;\{\mathcal{N}_{A\rightarrow B}^{\bm{\theta}}\}_{\bm{\theta}}
+ \widehat{I}_F(\bm{\theta},W;\{\rho_{RA}^{\bm{\theta}}\}_{\bm{\theta}}).
\end{equation}
This concludes the proof.
\end{proof}

We now show how the chain rule results in an amortization collapse. The chain rule in \eqref{eq:multi-RLD-chain-rule} can be rewritten as
\begin{equation}
	\widehat{I}_F(\bm{\theta},W;\{\mathcal{N}_{A\rightarrow B}^{\bm{\theta}}(\rho_{RA}^{\bm{\theta}})\}_{\bm{\theta}}) - \widehat{I}_F(\bm{\theta},W;\{\rho_{RA}^{\bm{\theta}}\}_{\bm{\theta}}) \leq  \widehat{I}_F(\bm{\theta},W;\{\mathcal{N}_{A\rightarrow B}^{\bm{\theta}}\}_{\bm{\theta}}).
\end{equation}
Taking the supremum over input-state families on the left-hand side gives
\begin{equation}
	\widehat{I}_F^{\mathcal{A}}(\bm{\theta},W;\{\mathcal{N}_{A\rightarrow B}^{\bm{\theta}}\}_{\bm{\theta}}) \leq \widehat{I}_F(\bm{\theta},W;\{\mathcal{N}_{A\rightarrow B}^{\bm{\theta}}\}_{\bm{\theta}}).
\end{equation}
When the above is combined with \eqref{eq:amortization-basic-inequality}, we obtain the desired amortization collapse in \eqref{eq:amort-collapse-2}.

We thus see that the chain rule for the multiparameter RLD Fisher information \eqref{eq:multi-RLD-chain-rule} leads to the amortization collapse in \eqref{eq:amort-collapse-2}.
The amortization collapse implies that the $n$-sequential-use RLD Fisher information is simply $n$ times the single-use RLD Fisher information.

\section{Single-letter Quantum Cramer--Rao Bound}

The following single-letter Cramer--Rao bound for estimating parameter $\theta$ embedded in $\{\mathcal{N}_{A\rightarrow B}^{\theta}\}_{\theta}$ holds both in the parallel setting \cite{Hayashi2011} and the more general sequential setting \cite{KW20a} of channel estimation. In the sequential setting, it arises from the combination of the chain rule, the ensuing amortization collapse for the RLD Fisher information, and Theorem~18 of \cite{KW20a}:
\begin{equation}
	\operatorname{Var}(\hat{\theta})\geq\frac{1}{n\widehat{I}_{F}(\theta;\{\mathcal{N}_{A\rightarrow B}^{\theta}\}_{\theta})}. \label{eq:single-letter-single-param-crb}
\end{equation}

To prove an analogous result for the multiparameter case, we need the following converse theorem, which is a generalization of \cite[Theorem~18]{KW20a}:

\begin{theorem} \label{thm:meta-converse}
Consider a general sequential estimation protocol of the form depicted in Figure~\ref{fig:sequential strategies}. The following inequality holds:
\begin{equation}
\widehat{I}_{F}(\bm{\theta}, W;\{\omega_{R_{n}B_{n}}^{\bm{\theta}}\}_{\bm{\theta}})\leq
n\cdot \widehat{I}_{F}^{\mathcal{A}}(\bm{\theta}, W;\{\mathcal{N}_{A\rightarrow B}^{\bm{\theta}
}\}_{\bm{\theta}}),
\end{equation}
where $\omega_{R_{n}B_{n}}^{\bm{\theta}}$ is the final state of an $n$-round sequential estimation protocol, and $\mathcal{S}^1$ through $\mathcal{S}^{n-1}$ are interleaving quantum channels in the protocol, both as in Figure \ref{fig:sequential strategies}.
\end{theorem}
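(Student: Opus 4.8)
The plan is to run the standard ``amortization'' telescoping argument: express $\widehat{I}_{F}(\bm{\theta},W;\{\omega_{R_{n}B_{n}}^{\bm{\theta}}\}_{\bm{\theta}})$ as a sum of $n$ single-round increments, bound each increment by the amortized RLD Fisher information value of $\{\mathcal{N}_{A\rightarrow B}^{\bm{\theta}}\}_{\bm{\theta}}$, and check that the parameter-independent interleaving channels $\mathcal{S}^{i}$ contribute nothing. First I would fix notation for the protocol in Figure~\ref{fig:sequential strategies}: write $\rho_{R_{1}A_{1}}$ for the ($\bm{\theta}$-independent) input state, set $\sigma_{R_{i}B_{i}}^{\bm{\theta}}\coloneqq\mathcal{N}_{A_{i}\rightarrow B_{i}}^{\bm{\theta}}(\rho_{R_{i}A_{i}}^{\bm{\theta}})$ for $i\in\{1,\dots,n\}$, and $\rho_{R_{i+1}A_{i+1}}^{\bm{\theta}}\coloneqq\mathcal{S}_{R_{i}B_{i}\rightarrow R_{i+1}A_{i+1}}^{i}(\sigma_{R_{i}B_{i}}^{\bm{\theta}})$ for $i\in\{1,\dots,n-1\}$, so that $\omega_{R_{n}B_{n}}^{\bm{\theta}}=\sigma_{R_{n}B_{n}}^{\bm{\theta}}$. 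It is convenient to abbreviate $a_{i}\coloneqq\widehat{I}_{F}(\bm{\theta},W;\{\rho_{R_{i}A_{i}}^{\bm{\theta}}\}_{\bm{\theta}})$ and $b_{i}\coloneqq\widehat{I}_{F}(\bm{\theta},W;\{\sigma_{R_{i}B_{i}}^{\bm{\theta}}\}_{\bm{\theta}})$; since $\rho_{R_{1}A_{1}}$ has no $\bm{\theta}$-dependence, $a_{1}=0$, and the quantity to be bounded is exactly $b_{n}$. If $\widehat{I}_{F}^{\mathcal{A}}(\bm{\theta},W;\{\mathcal{N}_{A\rightarrow B}^{\bm{\theta}}\}_{\bm{\theta}})=+\infty$ the inequality is trivial, so I may assume it is finite.

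Two ingredients do all the work. The first is that, for every round $i$,
\begin{equation}
b_{i}-a_{i}\leq\widehat{I}_{F}^{\mathcal{A}}(\bm{\theta},W;\{\mathcal{N}_{A\rightarrow B}^{\bm{\theta}}\}_{\bm{\theta}}),
\end{equation}
which is immediate from the definition of the amortized RLD Fisher information value upon taking $\{\rho_{R_{i}A_{i}}^{\bm{\theta}}\}_{\bm{\theta}}$ as the input-state family in the supremum. The second is that the parameter-independent interleaving channels do not increase the RLD Fisher information value, i.e.,
\begin{equation}
a_{i+1}\leq b_{i}\qquad\text{for }i\in\{1,\dots,n-1\}.
\end{equation}
To obtain this I would invoke the chain rule \eqref{eq:multi-RLD-chain-rule} with $\mathcal{S}^{i}$ playing the role of $\mathcal{N}^{\bm{\theta}}$ (trivial reference system, with $R_{i}B_{i}$ as the ``channel input'' and $\sigma_{R_{i}B_{i}}^{\bm{\theta}}$ as the input-state family): since $\mathcal{S}^{i}$ carries no $\bm{\theta}$-dependence, every $\partial_{\theta_{j}}$-derivative of its Choi operator vanishes, so $\widehat{I}_{F}(\bm{\theta},W;\{\mathcal{S}^{i}\}_{\bm{\theta}})=0$, and the chain rule collapses to $\widehat{I}_{F}(\bm{\theta},W;\{\mathcal{S}^{i}(\sigma_{R_{i}B_{i}}^{\bm{\theta}})\}_{\bm{\theta}})\leq\widehat{I}_{F}(\bm{\theta},W;\{\sigma_{R_{i}B_{i}}^{\bm{\theta}}\}_{\bm{\theta}})$, i.e., $a_{i+1}\leq b_{i}$. (Applying the chain rule to $\mathcal{N}^{\bm{\theta}}$ itself also gives $b_{i}\leq\widehat{I}_{F}(\bm{\theta},W;\{\mathcal{N}^{\bm{\theta}}\}_{\bm{\theta}})+a_{i}$, so starting from $a_{1}=0$ one sees inductively that all $a_{i},b_{i}$ are finite whenever the amortized quantity is; hence the arithmetic below is between real numbers.)

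It then remains to telescope. Using $a_{1}=0$,
\begin{equation}
b_{n}=b_{n}-a_{1}=\sum_{i=1}^{n}(b_{i}-a_{i})+\sum_{i=1}^{n-1}(a_{i+1}-b_{i}),
\end{equation}
where the second equality is an identity (the right-hand side collapses). Bounding each term of the first sum by the first ingredient and each term of the second sum by $0$ via the second ingredient yields $b_{n}\leq n\,\widehat{I}_{F}^{\mathcal{A}}(\bm{\theta},W;\{\mathcal{N}_{A\rightarrow B}^{\bm{\theta}}\}_{\bm{\theta}})$, which is the claim. Everything here is routine bookkeeping except the data-processing step $a_{i+1}\leq b_{i}$, which I expect to be the one point that needs care; as indicated it is already subsumed by the chain rule of Appendix~\ref{app:chain-rule}, though one could equally deduce it from L\"owner monotonicity of the RLD Fisher information matrix under the channel $\mathcal{S}^{i}$ followed by conjugation with $W^{1/2}$ and taking the trace.
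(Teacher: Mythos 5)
Your proposal is correct and follows essentially the same route as the paper's proof: a telescoping sum over the $n$ rounds, with each increment $b_i - a_i$ bounded by the amortized RLD Fisher information value and each interleaving step $a_{i+1} \leq b_i$ handled by data processing. The only cosmetic difference is that you derive the data-processing step for the parameter-independent channels $\mathcal{S}^i$ from the chain rule \eqref{eq:multi-RLD-chain-rule} (using $\widehat{I}_F(\bm{\theta},W;\{\mathcal{S}^i\}_{\bm{\theta}})=0$), whereas the paper invokes the data-processing inequality for the RLD Fisher information directly; both are valid.
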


\begin{proof}
Consider that
\begin{align}
&  \widehat{I}_{F}(\bm{\theta}, W;\{\omega_{R_{n}B_{n}}^{\bm{\theta}}\}_{\bm{\theta}})\nonumber\\
&  =\widehat{I}_{F}(\bm{\theta}, W;\{\omega_{R_{n}B_{n}}^{\bm{\theta}}\}_{\bm{\theta}
})-\widehat{I}_{F}(\bm{\theta}, W;\{\rho_{R_{1}A_{1}}\}_{\bm{\theta}})\\
&  =\widehat{I}_{F}(\bm{\theta}, W;\{\omega_{R_{n}B_{n}}^{\bm{\theta}}\}_{\bm{\theta}
})-\widehat{I}_{F}(\bm{\theta}, W;\{\rho_{R_{1}A_{1}}\}_{\bm{\theta}})+\sum_{i=2}^{n}\left(
\widehat{I}_{F}(\bm{\theta}, W;\{\rho_{R_{i}A_{i}}^{\bm{\theta}}\}_{\bm{\theta}})-\widehat{I}
_{F}(\bm{\theta}, W;\{\rho_{R_{i}A_{i}}^{\bm{\theta}}\}_{\bm{\theta}})\right) \\
&  =\widehat{I}_{F}(\bm{\theta}, W;\{\omega_{R_{n}B_{n}}^{\bm{\theta}}\}_{\bm{\theta}
})-\widehat{I}_{F}(\bm{\theta}, W;\{\rho_{R_{1}A_{1}}\}_{\bm{\theta}})\nonumber\\
&  \qquad+\sum_{i=2}^{n}\left(  \widehat{I}_{F}(\bm{\theta}, W;\{\mathcal{S}
_{R_{i-1}B_{i-1}\rightarrow R_{i}A_{i}}^{i-1}(\rho_{R_{i-1}B_{i-1}}^{\bm{\theta}
})\}_{\bm{\theta}})-\widehat{I}_{F}(\bm{\theta}, W;\{\rho_{R_{i}A_{i}}^{\bm{\theta}}\}_{\bm{\theta}
})\right) \\
&  \leq\widehat{I}_{F}(\bm{\theta}, W;\{\omega_{R_{n}B_{n}}^{\bm{\theta}}\}_{\bm{\theta}
})-\widehat{I}_{F}(\bm{\theta}, W;\{\rho_{R_{1}A_{1}}\}_{\bm{\theta}})\nonumber\\
&  \qquad+\sum_{i=2}^{n}\left(  \widehat{I}_{F}(\bm{\theta}, W;\{\rho_{R_{i-1}B_{i-1}
}^{\bm{\theta}}\}_{\bm{\theta}})-\widehat{I}_{F}(\bm{\theta}, W;\{\rho_{R_{i}A_{i}}^{\bm{\theta}
}\}_{\bm{\theta}})\right) \\
&  =\sum_{i=1}^{n}\left(  \widehat{I}_{F}(\bm{\theta}, W;\{\rho_{R_{i}B_{i}}^{\bm{\theta}
}\}_{\bm{\theta}})-\widehat{I}_{F}(\bm{\theta}, W;\{\rho_{R_{i}A_{i}}^{\bm{\theta}}\}_{\bm{\theta}
}\right) \\
&  =\sum_{i=1}^{n}\left(  \widehat{I}_{F}(\bm{\theta}, W;\{\mathcal{N}_{A_{i}
	\rightarrow B_{i}}^{\bm{\theta}}(\rho_{R_{i}A_{i}}^{\bm{\theta}})\}_{\bm{\theta}}
)-\widehat{I}_{F}(\bm{\theta}, W;\{\rho_{R_{i}A_{i}}^{\bm{\theta}}\}_{\bm{\theta}}\right) \\
&  \leq n\cdot\sup_{\{\rho_{RA}^{\bm{\theta}}\}_{\bm{\theta}}}\left[  \widehat{I}
_{F}(\bm{\theta}, W;\{\mathcal{N}_{A\rightarrow B}^{\bm{\theta}}(\rho_{RA}^{\bm{\theta}
})\}_{\bm{\theta}})-\widehat{I}_{F}(\bm{\theta}, W;\{\rho_{RA}^{\bm{\theta}})\}_{\bm{\theta}})\right]
\\
&  =n\cdot \widehat{I}_{F}^{\mathcal{A}}(\bm{\theta}, W;\{\mathcal{N}_{A\rightarrow
	B}^{\bm{\theta}}\}_{\theta}).
\end{align}
The first equality follows because the initial state $\rho_{R_1 A_1}$ has no dependence on any of the parameters in $\bm{\theta}$. The first inequality arises due to the data-processing inequality for the RLD Fisher information. The other steps are straightforward manipulations.
\end{proof}

We have now assembled all the necessary components needed to establish our main result, a multiparameter Cramer--Rao bound for sequential channel estimation.

\begin{theorem} \label{thm:single-letter-multi-param-crb}
	For a differentiable channel family $\{\mathcal{N}_{A\rightarrow B}^{\bm{\theta}}\}_{\bm{\theta}}$ and a positive semidefinite weight matrix $W$ with $\Tr[W] = 1$, the following multiparameter Cramer--Rao bound holds:
	\begin{equation}
		\Tr[ W \text{Cov}(\bm{\theta}) ] \geq \frac{1}{n  \widehat{I}_F (\bm{\theta},W;\{ \mathcal{N}_{A\rightarrow B}^{\bm{\theta}} \}_{\bm{\theta}}) }. \label{eq:single-letter-multi-param-crb}
	\end{equation}
\end{theorem}

\begin{proof}
	The multiparameter Cramer--Rao bound follows as a direct consequence of the chain rule \eqref{eq:multi-RLD-chain-rule}, ensuing amortization collapse \eqref{eq:amort-collapse-2}, as well as the converse theorem \eqref{thm:meta-converse}.
\end{proof}


The bounds in \eqref{eq:single-letter-single-param-crb} and \eqref{eq:single-letter-multi-param-crb} have an important implication: if the RLD Fisher information is finite, then  Heisenberg scaling is impossible. That is, if $\operatorname{supp}(\partial_{\theta}\Gamma_{RB}^{\mathcal{N}^{\theta}})\subseteq\operatorname{supp}(\Gamma_{RB}^{\mathcal{N}^{\theta}})$ holds for the single-parameter case or if \eqref{eq:app:channel-RLD-finiteness-cond} holds for the multiparameter case, then all estimation strategies are limited by the shot-noise limit. A simple corollary of this is that for every full-rank quantum channel family, Heisenberg scaling is unattainable. Our no-go condition applies generally, i.e., for all quantum channels and in the sequential estimation setting. It builds upon and generalizes results on this problem from other works, which apply to the parallel estimation setting \cite{Fujiwara2008, Hayashi2011}, to sequential channel estimation with Markovian noise \cite{ Demkowicz-Dobrzanski2017, Zhou2018, Zhou2020}, and to the general sequential channel estimation setting \cite{Mat10}.

\section{Evaluating Bounds for the Generalized Amplitude Damping Channel}

Here, we apply the Cramer--Rao bound \eqref{eq:single-letter-multi-param-crb} to the task of estimating the parameters of a generalized amplitude damping channel \cite{NC00}, a qubit-to-qubit channel that is used to model noise in, e.g., superconducting quantum circuits. A generalized amplitude damping channel (GADC) is defined in terms of a loss parameter $\gamma \in (0,1)$ and a noise parameter $N \in (0,1)$. The GADC has been studied previously in the context of quantum estimation theory. In Ref.~\cite{Fujiwara_2003}, the SLD Fisher information matrix for a generalized Pauli channel was evaluated, and Ref.~\cite{Fujiwara2004} concerns optimal strategies for the estimation of the loss parameter $\gamma$ of a GADC.

The Choi operator of a GADC $\mathcal{A}_{\gamma,N}$ with loss parameter $\gamma$ and noise parameter $N$ is given by
\begin{equation}
	\Gamma_{RB}^{\mathcal{A}_{\gamma,N}}  \coloneqq 
	\begin{bmatrix}
		1-\gamma N & 0 & 0 & \sqrt{1-\gamma}\\
		0 & \gamma N & 0 & 0\\
		0 & 0 & \gamma\left(  1-N\right)  & 0\\
		\sqrt{1-\gamma} & 0 & 0 & 1-\gamma\left(  1-N\right)
	\end{bmatrix}.
\end{equation}

Suppose that we wish to estimate both parameters $\gamma$ and $N$ simultaneously. Let us choose
\begin{equation}
W = \frac{1}{4}  \begin{pmatrix}1&1\\1&3\end{pmatrix}
,
\label{eq:W-matrix-choice-end}
\end{equation}
which satisfies the requirements for a valid weight matrix (positive semi-definite with unit trace). To compute the RLD bound in \eqref{eq:single-letter-multi-param-crb}, we calculate
\begin{multline} \label{eq:gadc-rld-info-value}
	\widehat{I}_F(\{\gamma, N\}, W; \{\mathcal{A}_{\gamma, N} \}_{\gamma, N})  = \frac{1}{4} \Big\Vert \Tr_B[ (\partial_{\gamma} \Gamma^{\mathcal{A}}) (\Gamma^{\mathcal{A}})^{-1} (\partial_\gamma \Gamma^{\mathcal{A}}) + (\partial_{\gamma} \Gamma^{\mathcal{A}}) (\Gamma^{\mathcal{A}})^{-1} (\partial_N \Gamma^{\mathcal{A}}) \\ + (\partial_{N} \Gamma^{\mathcal{A}}) (\Gamma^{\mathcal{A}})^{-1} (\partial_\gamma \Gamma^{\mathcal{A}}) + 3 (\partial_{N} \Gamma^{\mathcal{A}}) (\Gamma^{\mathcal{A}})^{-1} (\partial_N \Gamma^{\mathcal{A}})]  \Big\Vert_{\infty}.
\end{multline}
where $\mathcal{A}$ is used as shorthand for $\mathcal{A}_{\gamma, N}$ and the system labels for $RB$ are omitted. As a consequence of \eqref{eq:single-letter-multi-param-crb}, the inverse of \eqref{eq:gadc-rld-info-value} is a lower bound on $\Tr[ W \text{Cov}(\{ \gamma, N \}) ] $.

We have
\begin{align}
	\partial_{\gamma}\Gamma_{RB}^{\mathcal{A}_{\gamma,N}} &=
\begin{bmatrix}
-N & 0 & 0 & -\frac{1}{2\sqrt{1-\gamma}}\\
0 & N & 0 & 0\\
0 & 0 & 1-N & 0\\
-\frac{1}{2\sqrt{1-\gamma}} & 0 & 0 & -\left(  1-N\right)
\end{bmatrix}, \text{ and} \\
	\partial_{N}\Gamma_{RB}^{\mathcal{A}_{\gamma,N}} &=-\gamma\left(  I_{2}
\otimes\sigma_{Z}\right)  .
\end{align}

We then have
\begin{align}
	\Tr_B[(\partial_{\gamma} \Gamma^{\mathcal{A}_{\gamma, N}}) (\Gamma^{\mathcal{A}_{\gamma, N}})^{-1} (\partial_\gamma \Gamma^{\mathcal{A}_{\gamma, N}})  ] &=
	\begin{bmatrix}
		\frac{\frac{1}{N-\gamma N}+\frac{1}{1-N}-4}{4 \gamma^2} & 0 \\
		0 & \frac{\frac{1}{(\gamma-1) (N-1)}+\frac{1}{N}-4}{4 \gamma^2}
	\end{bmatrix},
	\\
	\Tr_B[(\partial_{\gamma} \Gamma^{\mathcal{A}_{\gamma, N}}) (\Gamma^{\mathcal{A}_{\gamma, N}})^{-1} (\partial_N \Gamma^{\mathcal{A}_{\gamma, N}})  ] &= 
	\begin{bmatrix}
	-\frac{1-2 N}{2 \gamma N(1-N)}  & 0 \\
	0 & -\frac{1-2 N}{2 \gamma N(1-N)}
	\end{bmatrix},
	\\
	\Tr_B[(\partial_{N} \Gamma^{\mathcal{A}_{\gamma, N}}) (\Gamma^{\mathcal{A}_{\gamma, N}})^{-1} (\partial_\gamma \Gamma^{\mathcal{A}_{\gamma, N}})  ] &=
	\begin{bmatrix}
	-\frac{1-2 N}{2 \gamma N(1-N)}  & 0 \\
	0 & -\frac{1-2 N}{2 \gamma N(1-N)}
	\end{bmatrix},
	\\
	\Tr_B[(\partial_{N} \Gamma^{\mathcal{A}_{\gamma, N}}) (\Gamma^{\mathcal{A}_{\gamma, N}})^{-1} (\partial_N \Gamma^{\mathcal{A}_{\gamma, N}})  ] &= 
	\begin{bmatrix}
		\frac{1}{N(1-N)} & 0 \\
		0 & \frac{1}{N(1-N)}
	\end{bmatrix},
\end{align}
with which the expression in \eqref{eq:gadc-rld-info-value} is directly calculated. Furthermore, we verify our direct calculation by explicitly calculating the SDP in Proposition \ref{prop:rld-channels-sdp} and find agreement between the two approaches with up to eights digits of precision.

We compare the RLD Fisher information bound to the generalized Helstrom Cramer--Rao bound \cite{Albarelli2020}, which in this case reduces to the SLD Fisher information bound because it is being applied for the case of parametric, rather than semiparametric, estimation. This lower bound can be achieved asymptotically up to a constant prefactor \cite{Guta2006, Hayashi2008, Yamagata2013, Yang2019b, Tsang2019a}.

To calculate the SLD Fisher information, we choose input probe state $|\psi(p)\rangle \coloneqq \sqrt{p} \ket{00} + \sqrt{1-p} \ket{11}$, with $W$ chosen as in the RLD calculation. It suffices to optimize the single-copy SLD Fisher information over such states due to the $\sigma_Z$ covariance of the channel $\mathcal{A}_{\gamma, N}$. This SLD bound for estimating just the loss parameter $\gamma$ of a GADC was calculated in Ref.~\cite{Fujiwara2004}.

With our particular choice of input state, we have
\begin{equation}
	\rho_{\gamma, N} \coloneqq \mathcal{A}_{\gamma,N}(\psi_p) = 
	\begin{bmatrix}
		p (1 - \gamma N) & 0 & 0 & \sqrt{p(1-p)(1-\gamma)} \\
		0 & p \gamma N & 0 & 0 \\
		0 & 0 & (1-p) \gamma (1-N) & 0 \\
		\sqrt{p(1-p)(1-\gamma)} & 0 & 0 & (1-p)( 1 - (1-N)\gamma )
	\end{bmatrix}.
\end{equation}

The SLD Fisher information takes the form of a $2 \times 2$ matrix:
\begin{equation} \label{eq:app-sld-fisher-matrix}
	I_F({\gamma, N}; \{ \rho_{\gamma, N} \}_{\gamma, N})_{jk} = \Tr\!\left[ \rho_{\gamma, N} L_j L_k \right] ,
\end{equation}
where $j$ and $k$ each take values $1$ or $2$ that correspond to either $\gamma$ or $N$. We have $L_j$ defined implicitly via
$
	\partial_j \rho_{\gamma, N} = \frac{1}{2} \left( \rho_{\gamma, N} L_j + L_j \rho_{\gamma, N} \right)
$.

Consider that $\rho_{\gamma, N}$ has spectral decomposition
\begin{equation}
\rho_{\gamma, N} = \sum_j \lambda_{\gamma, N}^j | \psi_{\gamma, N}^j \rangle \langle \psi_{\gamma, N}^j |.
\end{equation}
We then have
\begin{align}
L_\gamma &= 2\sum_{j,k:\lambda^{j}_{\gamma, N}+\lambda^{k}_{\gamma, N}>0} \frac{\langle\psi^{j}_{\gamma, N}|(\partial_{\gamma}\rho_{\gamma, N})|\psi^{k}_{\gamma, N}\rangle}{\lambda^{j}_{\gamma, N}+\lambda^{k}_{\gamma, N}}|\psi^{j}_{\gamma, N}\rangle\!\langle\psi^{k}_{\gamma, N}| \text{~and} \\
L_N &= 2\sum_{j,k:\lambda^{j}_{\gamma, N}+\lambda^{k}_{\gamma, N}>0} \frac{\langle\psi^{j}_{\gamma, N}|(\partial_{N}\rho_{\gamma, N})|\psi^{k}_{\gamma, N}\rangle}{\lambda^{j}_{\gamma, N}+\lambda^{k}_{\gamma, N}}|\psi^{j}_{\gamma, N}\rangle\!\langle\psi^{k}_{\gamma, N}|,
\end{align}
which enables us to calculate the elements of the SLD Fisher information matrix given in~\eqref{eq:app-sld-fisher-matrix}. 

\begin{figure}
 	\begin{subfigure}{.5\textwidth}
 		\centering
 		\includegraphics[width=0.9\linewidth]{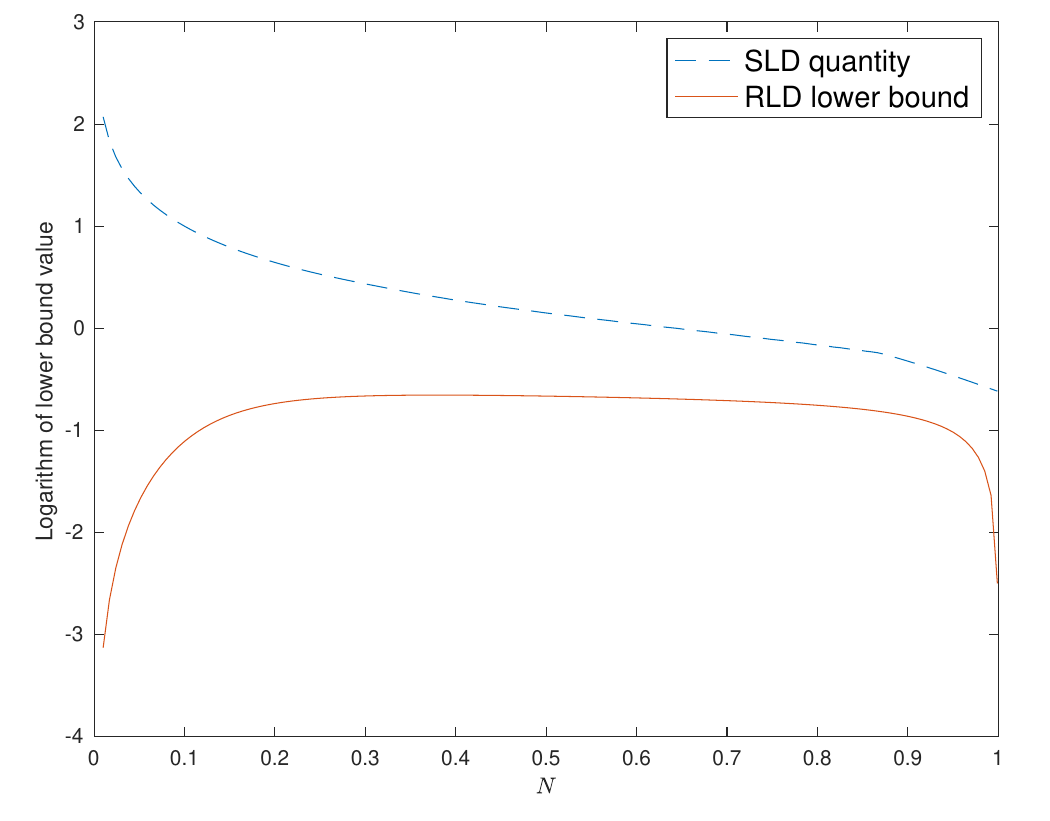}
 		\caption{}
 		\label{fig:fixed-N-0-2}
 	\end{subfigure}
	\begin{subfigure}{.5\textwidth}
		\centering
		\includegraphics[width=0.9\linewidth]{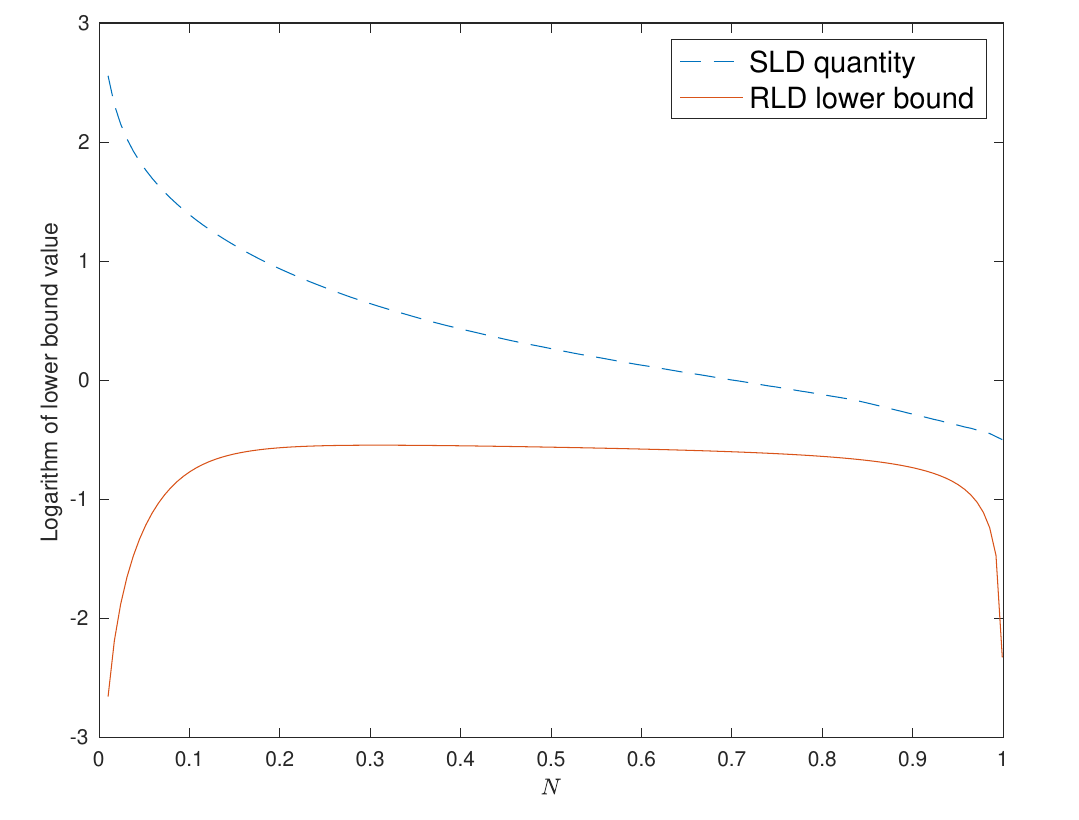}
		\caption{}
		\label{fig:fixed-N-0-3}
	\end{subfigure}
	\caption{Logarithms of the SLD quantity \eqref{eq:sld-quantity-gadc} and the inverse of the RLD Fisher information value \eqref{eq:gadc-rld-info-value} versus loss $\gamma$ with fixed noise $N$. In (a), $N=0.2$, and in (b), $N=0.3$.}
	\label{fig:gadc-estimation-two-parameters-fixed-N}
	\vspace{2em}
	\begin{subfigure}{.5\textwidth}
		\centering
		\includegraphics[width=0.9\linewidth]{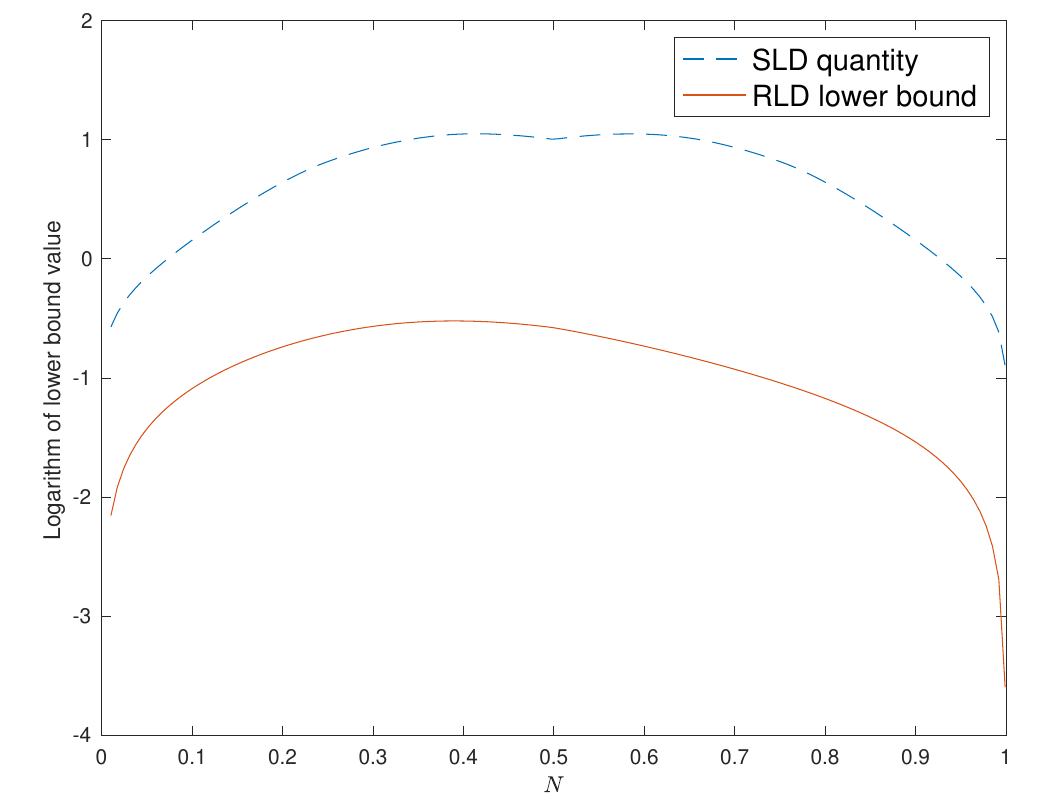}
		\caption{}
		\label{fig:fixed-g-0-2}
	\end{subfigure}
	\begin{subfigure}{.5\textwidth}
		\centering
		\includegraphics[width=0.9\linewidth]{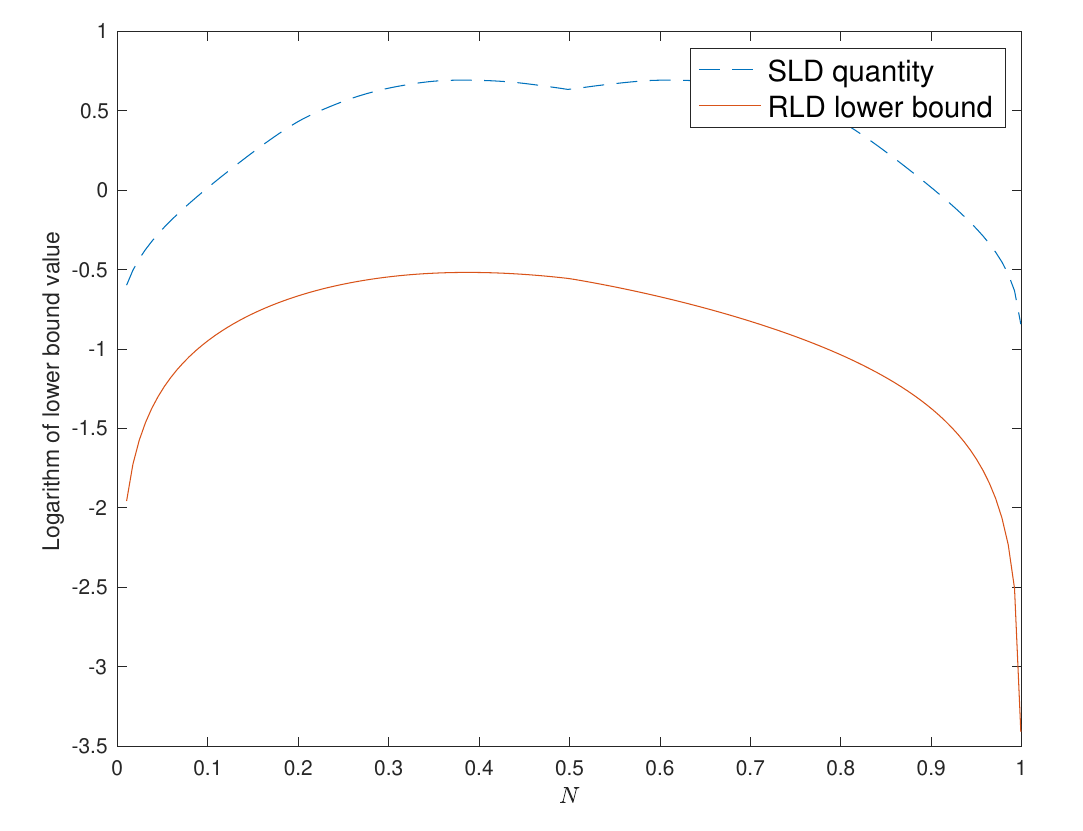}
		\caption{}
		\label{fig:fixed-g-0-3}
	\end{subfigure}
	\caption{Logarithms of the SLD quantity \eqref{eq:sld-quantity-gadc} and the inverse of the RLD Fisher information value \eqref{eq:gadc-rld-info-value} versus noise $N$ with fixed loss $\gamma$. In (a), $\gamma=0.2$, and in (b), $\gamma=0.3$.
		In each of the four figures above, both lines indicate lower bounds on the quantity
		$\Tr[ W \text{Cov} (\{ \gamma, N \} )]$
		where $W$ is chosen in \eqref{eq:W-matrix-choice-end}
		(as discussed, the SLD Fisher information value is a lower bound up to a constant prefactor).
		For the SLD quantity, we optimize over input states of the form $\sqrt{p} \ket{00}+ \sqrt{1-p} \ket{11}$.
	}
	\label{fig:gadc-estimation-two-parameters-fixed-g}
\end{figure}

We optimize over the parameter $p$ to obtain the SLD Fisher information bound. To be clear, we minimize
\begin{equation} \label{eq:sld-quantity-gadc}
\Tr[W [I_F(\{\gamma,N\}; \{\mathcal{A}_{\gamma, N}(\psi(p)))\}_{\{\gamma,N\}}]^{-1}]
\end{equation}
with respect to $p$. We compare the SLD quantity \eqref{eq:sld-quantity-gadc} to our RLD lower bound \eqref{eq:gadc-rld-info-value} in Figures~\ref{fig:gadc-estimation-two-parameters-fixed-N} and \ref{fig:gadc-estimation-two-parameters-fixed-g}. In Figure~\ref{fig:gadc-estimation-two-parameters-fixed-N}, we keep $N$ fixed and vary $\gamma$ from 0 to 1. In Figure~\ref{fig:gadc-estimation-two-parameters-fixed-g}, instead we keep $\gamma$ fixed and vary $N$ from 0 to 1. We demonstrate that the RLD lower bound is within one to two orders of magnitude of the SLD Fisher information.

\section{Summary and Conclusion}

In this paper, we provided a single-letter, efficiently computable Cramer--Rao bound for the task of multiparameter estimation. We did so by introducing the amortized RLD Fisher information and proving an amortization collapse for it. The single-letter Cramer--Rao bound also leads to a simple yet rigorous no-go condition for Heisenberg scaling in quantum multiparameter estimation. Finally, we evaluated our bound when estimating the two parameters of a generalized amplitude damping channel.

Lately, multiparameter estimation has been a fruitful area of study in quantum information theory. We believe that our results both offer a new perspective and extend the current body of work, and so they should have metrological applications theoretically and experimentally.

\section*{Acknowledgments}
This paper is dedicated to the memory of Jonathan P. Dowling. Jon's impact on quantum metrology was far-reaching, and he played a role in inspiring this project.  We acknowledge Pieter Kok, Jasminder Sidhu, and Sisi Zhou for helpful discussions. We thank Masahito Hayashi for feedback on our paper.
VK acknowledges support from the LSU Economic Development Assistantship. VK and MMW acknowledge support from the US National Science Foundation via grant number 1907615. MMW acknowledges support from Stanford QFARM and AFOSR (FA9550-19-1-0369). 

\bibliographystyle{unsrt}
\bibliography{estimation-refs}

\appendix

\section{Technical Lemma} \label{app:lemma-dual-form}

\begin{lemma}
	\label{lemma:dual-form}Let $K$ and $Z$ be Hermitian operators,
	and let $W$ be a linear operator. Then the dual of the following semi-definite
	program
	\begin{equation}
		\inf_{M}\left\{  \operatorname{Tr}[KM]:
		\begin{bmatrix}
			M & W^{\dag}\\
			W & Z
		\end{bmatrix}
		\geq0\right\}  ,
	\end{equation}
	with $M$ Hermitian, is given by
	\begin{equation}
		\sup_{P,Q,R}\left\{  2\operatorname{Re}(\operatorname{Tr}[W^{\dag
		}Q])-\operatorname{Tr}[ZR]:P\leq K,
		\begin{bmatrix}
			P & Q^{\dag}\\
			Q & R
		\end{bmatrix}
		\geq0\right\}  ,
	\end{equation}
	where $Q$ is a linear operator and $P$ and $R$ are Hermitian.
\end{lemma}

\begin{proof}
	The standard forms of a\ primal and dual semi-definite program, for $A$ and
	$B$ Hermitian and $\Phi$ a Hermiticity-preserving map, are respectively as
	follows \cite{Wat18}:
	\begin{align}
		&  \inf_{Y\geq0}\left\{  \operatorname{Tr}[BY]:\Phi^{\dag}(Y)\geq A\right\}
		,\\
		&  \sup_{X\geq0}\left\{  \operatorname{Tr}[AX]:\Phi(X)\leq B\right\}  ,
	\end{align}
	where $\Phi^{\dag}$ is the Hilbert--Schmidt adjoint of $\Phi$. Noting that
	\begin{equation}
		\begin{bmatrix}
			M & W^{\dag}\\
			W & Z
		\end{bmatrix}
		\geq0\quad\Longleftrightarrow\quad
		\begin{bmatrix}
			M & -W^{\dag}\\
			-W & Z
		\end{bmatrix}
		\geq0\quad\Longleftrightarrow\quad
		\begin{bmatrix}
			M & 0\\
			0 & 0
		\end{bmatrix}
		\geq
		\begin{bmatrix}
			0 & W^{\dag}\\
			W & -Z
		\end{bmatrix}
		,
	\end{equation}
	we conclude the statement of the lemma after making the following
	identifications:
	\begin{align}
		B  &  =K,\quad Y=M,\quad\Phi^{\dag}(M)=
		\begin{bmatrix}
			M & 0\\
			0 & 0
		\end{bmatrix}
		,\\
		A  &  =
		\begin{bmatrix}
			0 & W^{\dag}\\
			W & -Z
		\end{bmatrix}
		,\quad X=
		\begin{bmatrix}
			P & Q^{\dag}\\
			Q & R
		\end{bmatrix}
		,\quad\Phi(X)=P.
	\end{align}
	This concludes the proof.
\end{proof}

\end{document}